\author{Alexander S.\ Wein\thanks{Email: \textit{awein@cims.nyu.edu}. Partially supported by NSF grant DMS-1712730 and by the Simons Collaboration on Algorithms and Geometry. Part of this work was done while the author was visiting the Simons Institute for the Theory of Computing.}}
\affil{Department of Mathematics, Courant Institute of Mathematical Sciences, NYU}
\date{}
\title{Optimal Low-Degree Hardness of Maximum Independent Set}
\begin{document}

\maketitle

\begin{abstract}
We study the algorithmic task of finding a large independent set in a sparse \ER random graph with $n$ vertices and average degree $d$. The maximum independent set is known to have size $(2 \log d / d)n$ in the double limit $n \to \infty$ followed by $d \to \infty$, but the best known polynomial-time algorithms can only find an independent set of half-optimal size $(\log d / d)n$. We show that the class of \emph{low-degree polynomial algorithms} can find independent sets of half-optimal size but no larger, improving upon a result of Gamarnik, Jagannath, and the author. This generalizes earlier work by Rahman and Vir\'ag, which proved the analogous result for the weaker class of \emph{local algorithms}.
\end{abstract}

\section{Introduction}

We consider the problem of finding a large independent set (i.e., a set of vertices such that no two are adjacent) in the sparse \ER graph $G(n,d/n)$ where each of the $\binom{n}{2}$ potential edges on vertex set $[n]$ occurs independently with probability $d/n$. In the double limit $n \to \infty$ followed by $d \to \infty$, the largest independent set $S_{\max}$ is known to have asymptotic size $(2 \log d/d)n$. More precisely, as $n \to \infty$ with $d > 0$ fixed we have $\frac{1}{n} |S_{\max}| \to \alpha_d$ with high probability, for some $\alpha_d$ satisfying $\alpha_d = (1+o_d(1)) (2 \log d / d)$ as $d \to \infty$~\cite{frieze-indep,BGT-interp}. We will be interested in the associated algorithmic task: give a polynomial-time algorithm that takes as input a graph drawn from $G(n,d/n)$ and outputs (with high probability) a large independent set. We assume $d$ is known to the algorithm, although it can be estimated easily from the total number of edges. The influential work of Karp~\cite{karp} showed that a simple greedy algorithm can find an independent set of asymptotic size $(\log d/d)n$, which is half of the optimum. Decades later, we still do not know a polynomial-time algorithm to find an independent set of size $(1+\eps)(\log d/d)n$ for any fixed $\eps > 0$ (independent of both $d$ and $n$). Moreover, evidence has emerged to suggest that no such algorithm exists. It was shown by Coja-Oghlan and Efthymiou~\cite{CE-indep} (building on~\cite{AC-barriers}) that the independent sets of size larger than half-optimal are ``clustered'' in a way that implies slow mixing of the Metropolis process for sampling such sets. Furthermore, it was shown by Rahman and Vir\'ag~\cite{half-opt} (building on~\cite{LW-indep,GS-local}) that the class of \emph{local algorithms} can find independent sets of half-optimal size \emph{and no larger}. Here, a local algorithm (also called \emph{i.i.d.\ factors}) allows each vertex to decide whether or not to include itself in the set based only on its local neighborhood in the graph (of constant radius) along with i.i.d.\ random variables attached to the vertices (see Section~\ref{sec:upper} for a formal definition).

The above results suggest that $(\log d/d)n$ may be the fundamental limit for polynomial-time algorithms. In this work we provide further evidence for this by showing that $(\log d/d)n$ is the fundamental limit for the class of \emph{low-degree polynomial algorithms} (to be defined formally in the next section) where each vertex's membership (or non-membership) in the independent set is determined by thresholding a low-degree multivariate polynomial of the edge-indicator variables that describe the input graph. This class of low-degree algorithms includes the class of local algorithms mentioned above (see Remark~\ref{rem:local-to-ld}), and also (as discussed in Appendix~A of~\cite{GJW-lowdeg}) includes other popular algorithmic paradigms such as approximate message passing (e.g.\ \cite{amp,BM-amp,JM-amp,sk-opt,EMS-opt}) and power iteration\footnote{Notably, low-degree algorithms capture power iteration on \emph{any} matrix that is itself low-degree in the input. This allows for non-trivial spectral methods such as the \emph{tensor unfolding} method for tensor PCA~\cite{RM-tensor,sos-tensor-pca}, which outperforms more ``standard'' algorithms such as message passing and gradient descent~\cite{RM-tensor,BGJ-tensor}.}. Furthermore, starting from the influential line of work~\cite{pcal,HS-bayesian,sos-power,sam-thesis}, it has been established that low-degree algorithms (with degree logarithmic in the dimension) are precisely as powerful as the best known polynomial-time algorithms for a number of problems in high-dimensional statistics including planted clique, sparse PCA, community detection, tensor PCA, and many others~\cite{HS-bayesian,sos-power,sam-thesis,sk-cert,lowdeg-notes,subexp-sparse,heavy-tailed,secret-leakage,LZ-tensor,lowdeg-rec,spec-planting,sq-ld}. Thus, failure of low-degree algorithms is a form of concrete evidence for computational hardness of statistical problems. For more on low-degree algorithms, we refer the reader to~\cite{lowdeg-notes} (for a survey on the setting of~\emph{hypothesis testing}), \cite{lowdeg-rec} (for the setting of~\emph{estimation}), or~\cite{GJW-lowdeg} (for the setting of random optimization problems, which is the relevant setting for this work).

Most prior work on low-degree algorithms has focused on problems with a ``planted'' signal, in which case failure of low-degree algorithms can be shown via a direct linear-algebraic computation. This technique does not apply to ``non-planted'' problems such as the maximum independent set problem that we consider here, and so a different approach is needed which leverages structural properties of the solution space (see Section~\ref{sec:tech}). For non-planted problems, the first results for low-degree algorithms were given by Gamarnik, Jagannath, and the author~\cite{GJW-lowdeg} (building on~\cite{GJ-amp}), who showed that low-degree algorithms cannot find independent sets of size exceeding $(1+1/\sqrt{2})(\log d/d)n$ in $G(n,d/n)$. Here we improve this to the optimal threshold $(\log d/d)n$. We also provide the matching positive result, showing that $(\log d/d)n$ is achievable by low-degree algorithms (following a proof sketch given in~\cite{GJW-lowdeg}). This is the first non-planted problem for which matching upper and lower bounds have been obtained on the objective value attainable by low-degree algorithms (apart from trivial cases where the global optimum value can be reached). One conceptual advantage of our results over the existing results for local algorithms is that low-degree algorithms offer a unified framework to explain computational hardness in a wide variety of high-dimensional problems, whereas local algorithms are specific to problems involving sparse graphs. This is exemplified by the fact that our impossibility result can be extended to the case of dense graphs such as $G(n,1/2)$; see Section~\ref{sec:ext}.

\subsection{Main Results}

We now formally define the problem setup, following~\cite{GJW-lowdeg}. We say that a function $f:\RR^m\to\RR^n$ is a \emph{polynomial of degree (at most) $D$} if it may be written in the form 
\begin{equation}\label{eq:vec-val-poly}
f(Y) = (f_1(Y),\ldots,f_n(Y)),
\end{equation}
where each $f_i:\RR^m\to\RR$ is a multivariate polynomial (in the usual sense) of degree at most $D$ with real coefficients. We also define a \emph{random polynomial} $f: \RR^m \to \RR^n$ in the same way but where the coefficients may be random (but independent from the input $Y$): formally, for some probability space $(\Omega,\PP_\omega)$, $f$ is a map $f: \RR^m \times \Omega \to \RR^n$ such that $f(\cdot,\omega)$ is a degree-$D$ polynomial for each ``seed'' $\omega \in \Omega$. (We will see that randomness does not actually help; see Lemma~\ref{lem:rand-det}.)

For our purposes, the input to $f$ will be an $n$-vertex graph encoded as $Y \in \{0,1\}^m$ with $m = \binom{n}{2}$, where each entry of $Y$ is the indicator variable for the presence of a particular edge. We write $Y \sim G(n,d/n)$ for an \ER graph, i.e., $Y$ is i.i.d.\ Bernoulli$(d/n)$.

We need to define what it means for a polynomial $f: \RR^m \to \RR^n$ to find an independent set in a graph $Y$. Instead of asking $f(Y)$ to be the indicator vector of an independent set, we relax this somewhat and ask only for a ``near-indicator vector'' of a ``near-independent set''. More precisely, the following ``rounding'' procedure from~\cite{GJW-lowdeg} will be used to extract an independent set from the output of $f$.

\begin{definition}\label{def:V}
Let $f: \RR^m \to \RR^n$ be a random polynomial with $m = \binom{n}{2}$. For $Y \in \{0,1\}^m$, and $\eta \ge 0$, let $V^\eta_f(Y,\omega)$ be the independent set in the graph $Y$ obtained by the following procedure. Let
\[ A = \{i \in [n] \,:\, f_i(Y,\omega) \ge 1\}, \]
\[ \tilde A = \{i \in A \,:\, \text{$i$ has no neighbors in $A$ in the graph $Y$}\}, \]
and
\[ B = \{i \in [n] \,:\, f_i(Y,\omega) \in (1/2,1)\}. \]
Then define
\begin{equation}\label{eq:V}
V^\eta_f(Y,\omega) = \left\{\begin{array}{ll} \tilde A & \text{if } |A \setminus \tilde A| + |B| \le \eta n, \\ \emptyset & \text{otherwise.} \end{array}\right.
\end{equation}
\end{definition}

\noindent Informally speaking, $f_i$ should output a value $\ge 1$ to indicate that vertex $i$ is in the independent set and should output a value $\le 1/2$ to indicate that it is not. We allow a small number of ``errors'': there can be up to $\eta n$ vertices where either $f_i(Y) \in (1/2,1)$ or the independence constraint is violated. Vertices that violate the independence constraint are thrown out, and if too many errors are made then the output is the empty set $\emptyset$ (which is thought of as a ``failure'' event). While the choice of thresholds $1$ and $1/2$ is somewhat arbitrary, the interval $(1/2,1)$ of disallowed outputs is important for our impossibility result (Theorem~\ref{thm:main-lower}), as this ensures that a small change in $f(Y,\omega)$ cannot induce a large change in the resulting independent set $V_f^\eta(Y,\omega)$ without encountering the failure event $\emptyset$. On the other hand, our achievability result (Theorem~\ref{thm:main-upper}) will give a low-degree polynomial for which most outputs $f_i(Y)$ lie in $\{0,1\}$ exactly, i.e., it succeeds even under the more stringent definitions $A = \{i \in [n] \,:\, f_i(Y,\omega) = 1\}$ and $B = \{i \in [n] \,:\, f_i(Y,\omega) \notin \{0,1\}\}$.

\begin{definition}\label{def:opt}
For parameters $k > 0$, $\delta \in [0,1]$, $\gamma \ge 1$, and $\eta > 0$, a random polynomial $f: \RR^m \to \RR^n$ is said to \emph{$(k,\delta,\gamma,\eta)$-optimize} the independent set problem in $G(n,d/n)$ if the following are satisfied when $Y \sim G(n,d/n)$:
\begin{itemize}
    \item $\displaystyle \EE_{Y,\omega} \left[\|f(Y,\omega)\|^2\right] \le \gamma k$, and
    \item $\displaystyle \prob_{Y,\omega}\left[|V_f^\eta(Y,\omega)| \ge k\right] \ge 1-\delta$.
\end{itemize}
\end{definition}
\noindent Here, $k$ is the size of the independent set that is produced, $\delta$ is the algorithm's failure probability, $\gamma$ is a normalization parameter, and $\eta$ is the error tolerance of the rounding procedure $V_f^\eta$.

We now state our main results. Theorem~\ref{thm:main-lower} shows that no low-degree polynomial can find an independent set of size $(1+\eps)\frac{\log d}{d} n$, while Theorem~\ref{thm:main-upper} shows that some low-degree polynomial can find an independent set of size $(1-\eps)\frac{\log d}{d} n$. The proofs are given in Sections~\ref{sec:lower} and~\ref{sec:upper}, respectively. The results are interpreted in the remarks below.

\begin{theorem}[Impossibility]\label{thm:main-lower}
For any $\eps > 0$ there exists $d^* > 0$ such that for any $d \ge d^*$ there exists $n^* > 0$, $\eta > 0$, $C_1 > 0$, and $C_2 > 0$ (depending on $\eps,d$) such that the following holds. Let $n \ge n^*$, $\gamma \ge 1$, and $1 \le D \le \frac{C_1 n}{\gamma \log n}$, and suppose $\delta \ge 0$ satisfies
\begin{equation}\label{eq:delta}
\delta \le \exp\left(-C_2 \gamma D \log n\right).
\end{equation}
Then for $k = (1+\eps) \frac{\log d}{d} n$, there is no random degree-$D$ polynomial that $(k,\delta,\gamma,\eta)$-optimizes the independent set problem in $G(n,d/n)$.
\end{theorem}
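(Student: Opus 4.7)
The plan is to extend the overlap-gap-plus-stability framework of~\cite{GJW-lowdeg} by importing a branching/forest variant of the OGP, analogous to the one used by Rahman and Vir\'ag~\cite{half-opt}, that rules out large independent sets all the way down to the half-optimal threshold. Invoking Lemma~\ref{lem:rand-det} to reduce to deterministic $f$, the argument proceeds in three steps.

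\textbf{Step 1 (ramified coupled ensemble).} I build a tree-indexed family $\{Y^{(v)}\}_{v\in\mathcal T}$ of $\binom{n}{2}$-dimensional Bernoulli$(d/n)$ vectors such that (i) each $Y^{(v)}$ is marginally distributed as $G(n,d/n)$; (ii) a pair of siblings $v,v'$ shares all but an $O(1/T)$-fraction of edges; (iii) leaves in different root-subtrees are increasingly decorrelated, with two leaves of root-distance $2\ell$ sharing only $\approx (1-1/T)^{\ell}$ of their edges. This is assembled edge-wise by grafting a depth-$\log|\mathcal T|$ "resampling tree" onto each coordinate, a direct generalization of the two-sample Bernoulli coupling from~\cite{GJW-lowdeg}. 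The parameter $T$ (the resampling rate) and the arity/depth of $\mathcal T$ will be tuned at the end.

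\textbf{Step 2 (low-degree stability along the tree).} Using the hypercontractive $L^2$-estimate from~\cite{GJW-lowdeg}, for a degree-$D$ polynomial $f$ with $\EE\|f(Y)\|^2\le\gamma k$, sibling outputs satisfy an $L^2$-bound of the form
\[
\EE\|f(Y^{(v)})-f(Y^{(v')})\|^2\;\le\; C\,\tfrac{D}{T}\,\gamma k.
\]
The forbidden interval $(1/2,1)$ in Definition~\ref{def:V} converts this $\ell_2$-closeness into set-theoretic closeness: outside a failure event of probability $\le|\mathcal T|\delta+\exp(-\Omega(\eta n))$, every sibling pair satisfies $|V_f^\eta(Y^{(v)})\triangle V_f^\eta(Y^{(v')})|\le \xi n$ for some $\xi=\xi(T,D,\gamma,\eta)$ that can be made arbitrarily small, while each $|V_f^\eta(Y^{(v)})|\ge k$. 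Summing along the tree, any two leaves at root-distance $2\ell$ differ by at most $\ell\xi n$, and are nearly decorrelated as graph-pairs by construction.

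\textbf{Step 3 (branching OGP via first moment).} The structural obstruction I need is: in the coupled law $\{Y^{(v)}\}_{v\in\mathcal T}$, the expected number of families $\{S_v\}_{v}$ of independent sets with $|S_v|\ge(1+\eps)\tfrac{\log d}{d}n$, sibling symmetric difference $\le\xi n$, and root-to-leaf overlap profile realizing a forbidden intermediate range, is $\exp(-\Omega(n))$. This is the content I would adapt from Rahman–Vir\'ag: on a single graph they rule out "forest" configurations of independent sets above the half-optimal density by a first-moment computation whose log-expected count factorizes across the tree into a sum of per-branch contributions, each strictly negative once $\eps>0$. In the coupled setting the same factorization holds, with the extra bookkeeping that shared edges between $Y^{(v)}$ and $Y^{(v')}$ only strengthen the edge-constraint count (since the joint probability that an edge is absent in both graphs is at most the probability it is absent in one).

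\textbf{Main obstacle and parameter balancing.} The crux is Step~3: transplanting the Rahman–Vir\'ag forest first-moment bound from the static ensemble into the ramified coupling requires a careful accounting of which edges are shared between which $Y^{(v)}$'s, so that the bad-configuration count still decays exponentially at the sharp threshold $(1+\eps)(\log d/d)n$ rather than at a suboptimal constant as in~\cite{GJW-lowdeg}. Once this is in hand, combining Steps~2 and~3 by a union bound yields a contradiction: with $T$ polynomial in $n$ and $|\mathcal T|$ subexponential, the stability failure event has probability $\le|\mathcal T|\delta+\exp(-\Omega(\eta n))$ and the existence of a valid configuration has probability $\exp(-\Omega(n))$, so for the asserted parameter regime $D\le C_1 n/(\gamma\log n)$ and $\delta\le\exp(-C_2\gamma D\log n)$ both events together cover less than probability one, forcing no such $f$ to exist.
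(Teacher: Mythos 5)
Your proposal has two genuine gaps, and they sit exactly at the places the paper identifies as the real difficulties. First, in Step~2 you assert that outside an event of probability at most $|\mathcal T|\delta+\exp(-\Omega(\eta n))$, every sibling pair satisfies the required output-closeness. But the only input you have is an expectation bound $\EE\|f(Y^{(v)})-f(Y^{(v')})\|^2\le C\frac{D}{T}\gamma k$; Markov's inequality then gives closeness at the relevant threshold for a \emph{single} pair only with probability $1-O(D/(Tc))$, a quantity that is not exponentially (or even polynomially) small in $n$, and the union bound over the many pairs you need then fails outright. The paper's resolution of precisely this issue is Proposition~\ref{prop:stable}: an entropy/influence argument (Lemma~\ref{lem:q} combined with Lemma~\ref{lem:total-inf}) shows that the \emph{entire} interpolation path avoids all $c$-bad edges with probability at least $p^{4LD/c}$ --- a small but positive probability --- and the contradiction is obtained by arranging that every competing failure probability (including $\delta$ and the first-moment bound $\exp(-\Omega(n))$) is smaller than this. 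That is exactly where the hypothesis $\delta\le\exp(-C_2\gamma D\log n)$ comes from; your write-up never explains why such a condition on $\delta$ would be needed, which signals that the probability bookkeeping in Step~2 cannot be made to work as stated.

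Second, Step~3 as proposed is the step the paper explicitly argues cannot be carried out for low-degree algorithms: a Rahman--Vir\'ag-style symmetric forest of independent sets with a prescribed overlap profile can be manufactured from a hypothetical \emph{local} algorithm (exploiting exchangeability of the i.i.d.\ vertex labels), but there is no mechanism to force a merely \emph{stable} algorithm to realize any prescribed symmetric profile across a tree of correlated instances. The paper instead (a) uses a single linear resampling path rather than a tree-indexed ensemble, (b) defines an intrinsically \emph{asymmetric} forbidden structure (Proposition~\ref{prop:forbid}) in which only the successive new mass $|S_k\setminus\cup_{\ell<k}S_\ell|$ is constrained to the window $\left[\frac{\eps}{4}\Phi,\frac{\eps}{2}\Phi\right]$, and (c) runs a greedy extraction process along the path --- using Lemma~\ref{lem:spread} to guarantee that a freshly resampled graph's output has small intersection with the union of previously selected sets, so the process never stalls --- so that the stability bound $|U_t\,\triangle\,U_{t-1}|\le\frac{\eps}{4}\Phi$ forces the selected sets into that window. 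You flag the transplantation of the forest first-moment bound as ``the crux'' and ``the main obstacle'' but do not supply the replacement structure or the extraction mechanism, so the contradiction is not actually derived.
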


\begin{theorem}[Achievability]\label{thm:main-upper}
For any $\eps > 0$ there exists $d^* > 0$ such that for any $d \ge d^*$ and any $\eta > 0$ there exists $n^* > 0$, $D > 0$, $\gamma \ge 1$, and $C > 0$ (depending on $\eps,d,\eta$) such that the following holds for all $n \ge n^*$. For $k = (1-\eps)\frac{\log d}{d} n$ and $\delta = \exp(-C n^{1/3})$, there exists a (deterministic) degree-$D$ polynomial that $(k,\delta,\gamma,\eta)$-optimizes the independent set problem in $G(n,d/n)$.
\end{theorem}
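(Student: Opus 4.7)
The plan is to convert a known local algorithm into a low-degree polynomial in $Y$ with random coefficients indexed by the seed $\omega$. By the achievability half of Rahman--Vir\'ag~\cite{half-opt} (or the earlier construction of Lauer--Wormald~\cite{LW-indep}), there exists a local algorithm $\mathcal A$ with constant radius $r = r(\eps, d)$, using i.i.d.\ vertex labels $X_i \sim \text{Unif}[0,1]$ as its only randomness, that produces an independent set in $G(n, d/n)$ of size at least $(1 - \eps/2)(\log d/d)\, n$ with probability $1 - \exp(-\Omega(n))$. I take the seed $\omega$ of the random polynomial to be $(X_1, \ldots, X_n)$.

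Call a vertex $i$ \emph{good} if the $r$-ball $B_r(i)$ in the graph $Y$ is a tree with at most $M = M(\eps, d, \eta)$ vertices. A standard Poisson Galton--Watson analysis of the local structure of $G(n, d/n)$, together with bounds on short-cycle counts, gives that the expected fraction of bad vertices is $o_M(1)$; and an Azuma-type concentration inequality yields that the bad fraction exceeds $\eta/2$ with probability at most $\exp(-\Omega(n^{1/3}))$. For each $i$, I set
\[ f_i(Y, \omega) = \sum_T \beta_{i, T}(\omega) \prod_{e \in E(T)} Y_e, \]
where the sum ranges over rooted subtrees $T$ at $i$ of depth $\le r$ and $|V(T)| \le M$, and the coefficients $\beta_{i, T}(\omega)$ are defined by Mobius inversion on the poset of rooted subtrees so that, whenever $B_r(i) = T^\ast$ is a tree of size $\le M$, the sum telescopes to $\mathbf{1}[\mathcal A\text{ includes }i\text{ on }(T^\ast, \omega|_{V(T^\ast)})] \in \{0,1\}$. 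To force $f_i = 0$ on the most troublesome bad inputs (those where a short cycle passes through $i$), I multiply by a "cycle-free" indicator of constant degree $O(d^{2r})$, which is a standard sum-of-cycles polynomial. The resulting polynomial has constant degree $D = D(\eps, d, \eta)$.

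With this construction, on good $i$ we have $f_i \in \{0,1\}$ equal to the local algorithm's indicator, and on bad $i$ the value is either $0$ (when killed by the cycle-free factor) or some real number, in any case affecting only $o(n)$ vertices. Consequently $A = \{i : f_i \ge 1\}$ equals, up to the few residual bad vertices, the subset of good vertices included by $\mathcal A$, which is already an independent set, so $\tilde A = A$, and $B$ is confined to the rare residual bad vertices. The size bound $|A| \ge (1-\eps)(\log d/d)\, n$ follows from $\mathcal A$'s guarantee minus the bad-count budget; the second moment $\EE \|f\|^2 = O(n)$ is immediate since $f_i \in \{0,1\}$ on good inputs and remains bounded on bad ones, giving $\gamma = O(1)$; and the failure probability is dominated by $\mathcal A$'s own failure and the bad-count deviation, yielding $\delta \le \exp(-C n^{1/3})$.

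The above gives a random polynomial in $\omega$, and Lemma~\ref{lem:rand-det} derandomizes it to a deterministic polynomial with the same guarantees by fixing a good $\omega^\ast$ via an averaging argument. The main obstacle is verifying the Mobius telescoping on good inputs: the monomial $\prod_{e \in E(T)} Y_e$ only detects the presence of edges, not the absence, so one must check that summing over all rooted subtrees of $T^\ast$ correctly reconstructs the local algorithm's output even though individual subgraph indicators are blind to non-edges within $V(T)$. This works cleanly when $B_r(i)$ is a tree, because a tree has no non-tree edges within its own vertex set, but it can break down on cyclic $B_r(i)$; this is precisely why the cycle-free indicator is needed to neutralize cyclic $i$'s.
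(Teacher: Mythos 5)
Your overall strategy is the paper's: realize the Rahman--Vir\'ag local algorithm as a constant-degree polynomial via M\"obius inversion over small rooted neighborhoods, with the i.i.d.\ vertex labels as the random seed, and derandomize with Lemma~\ref{lem:rand-det}. However, several individual steps fail as written. The most serious is concentration. The $1-\exp(-\Omega(n))$ guarantee you attribute to~\cite{half-opt} is not what that result gives: Theorem~\ref{thm:local-alg} is an \emph{expectation} bound on the Galton--Watson tree, and upgrading it to a tail bound on $G(n,d/n)$ is exactly the main difficulty of this direction. Your ``Azuma-type'' inequality does not deliver $\exp(-\Omega(n^{1/3}))$ either: under edge exposure there are $\binom{n}{2}$ steps, so a deviation of order $n$ yields only a constant exponent, while under vertex exposure the Lipschitz constants are unbounded (one high-degree vertex can change the tree/size status of many $r$-balls). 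The paper instead uses the moment bounds for local functions from~\cite{very-sparse} (Proposition~\ref{prop:local-moment}, Corollary~\ref{cor:local-conc}) to concentrate the neighborhood counts $n_T$, and then converts count concentration into concentration of $|V_f^\eta|$ by partitioning the good vertices into $s+1$ bins with pairwise disjoint $r$-balls and applying Chernoff within each bin over the label randomness. You need some substitute for this entire step.

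Three further gaps. (i) The claim that $f_i$ ``remains bounded'' on bad inputs is false: if the $r$-ball of $i$ has $N$ edges, the inversion sum has up to roughly $(N+1)^{M}$ nonzero terms with bounded coefficients, so $|f_i|$ can be $N^{\Theta(M)}$; concluding $\EE[f_i^2]=O(1)$ requires a tail bound on $N$ (the paper proves $\prob[N \ge t] \le r\exp(-t^{1/r}/9)$ and integrates). (ii) The ``cycle-free indicator of constant degree'' does not exist as an exact indicator: a polynomial such as $1-\sum_C \prod_{e \in C} Y_e$ is negative or large in magnitude whenever two or more short cycles are present, and an exact indicator of acyclicity of the $r$-ball is not a constant-degree function of the edge variables. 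The gadget is also unnecessary: the paper performs the inversion over \emph{all} subgraphs with at most $s$ edges, asserts $f_v=h$ only on small tree neighborhoods, and absorbs the remaining vertices into the $\eta n$ error budget. (iii) Your deduction $\tilde A = A$ overlooks that a good vertex may be adjacent to a \emph{bad} vertex whose garbage value happens to be $\ge 1$, which evicts the good vertex from $\tilde A$; bounding $|A\setminus\tilde A|$ therefore requires controlling the number of edges incident to bad vertices, not merely the number of bad vertices. The paper sidesteps this by calling $v$ good only when its $2r$-neighborhood is a small tree, which guarantees every neighbor of $v$ has a well-behaved $r$-neighborhood and hence outputs a value in $\{0,1\}$ respecting independence.
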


\noindent A number of remarks are in order.

\begin{remark}
The results are non-asymptotic but can be thought of as capturing the double limit $n \to \infty$ followed by $d \to \infty$. In other words, $d$ is a large constant depending on $\eps$, and $n$ must then be chosen sufficiently large (where ``sufficiently large'' depends on $d$). In the sequel, asymptotic notation such as $O(\cdot)$ pertains to the limit $n \to \infty$ with all other parameters fixed; parameters not depending on $n$ are considered ``constants''.
\end{remark}

\begin{remark}
The ``tolerance'' parameter $\eta$ should be thought of as a small constant. The impossibility result shows that \emph{some} $\eta > 0$ (depending on $\eps,d$) is \emph{not} achievable, whereas the achievability result show that \emph{any} $\eta > 0$ is achievable. The ``normalization'' parameter $\gamma$ should be thought of as a large constant. The impossibility result shows that \emph{any} $\gamma \ge 1$ is not achievable, whereas the achievability result shows that \emph{some} $\gamma \ge 1$ (depending on $\eps,d,\eta$) is achievable.
\end{remark}

\begin{remark}
Typically, when proving impossibility results for low-degree algorithms, the goal is to rule out any degree $D = O(\log n)$ because polynomials of this degree can capture the best known algorithms for a wide array of problems. In our case, a constant degree $D$ (depending on $\eps,d,\eta$) is sufficient for the achievability result. On the other hand, our impossibility result rules out a much wider range of $D$ values: $D \lesssim n/\log n$. However, the requirement~\eqref{eq:delta} gives an additional tradeoff between $D$ and the failure probability $\delta$. This is present for technical reasons, and ideally we would replace~\eqref{eq:delta} by a milder condition such as $\delta = o(1)$. Still, note that the parameters $D = O(1)$ and $\delta = \exp(-\Omega(n^{1/3}))$ in our achievability result lie well within the set of $(D,\delta)$ pairs ruled out by our impossibility result.
\end{remark}

\begin{remark}
In the achievability result, the value of $\delta$ can likely be improved from $\exp(-Cn^{1/3})$ to $\exp(-Cn)$. This can perhaps be accomplished by using the powerful machinery of~\cite{empirical-nbhd} in place of Corollary~\ref{cor:local-conc}, but we do not attempt this here.
\end{remark}

\subsection{Proof Techniques}
\label{sec:tech}

We now give an overview of the proof techniques and discuss their relation to prior work. We first discuss the achievability result (Theorem~\ref{thm:main-upper}). It is known already that local algorithms can find independent sets of size $(1-\eps)\frac{\log d}{d} n$~\cite{half-opt}. Our proof transfers this to a result about low-degree algorithms by showing that \emph{any} local algorithm can be well-approximated by a constant-degree polynomial. A proof sketch of this reduction was given already in Appendix~A of~\cite{GJW-lowdeg}, but here we give the full details and determine the values of the parameters $D, \delta, \gamma, \eta$. The main difficulty lies in establishing that the failure probability $\delta$ is very small; for this we appeal to a result of~\cite{very-sparse} that gives tail bounds for certain ``local'' functions on sparse random graphs.

We now discuss the impossibility result (Theorem~\ref{thm:main-lower}), which is our main contribution. This result falls into a line of work initiated by Gamarnik and Sudan~\cite{GS-local}, who showed that local algorithms fail to find independent sets larger than $(1+1/\sqrt{2}) \frac{\log d}{d} n$. Their proof harnessed the so-called \emph{overlap gap property (OGP)}: in a typical graph drawn from $G(n,d/n)$, there are no two independent sets that each have size exceeding $(1+1/\sqrt{2}) \frac{\log d}{d} n$ and have intersection size (``overlap'') roughly $\frac{\log d}{d} n$. They used an interpolation argument to show that if a hypothetical local algorithm were to succeed at finding independent sets larger than $(1+1/\sqrt{2}) \frac{\log d}{d} n$, this could be used to construct two independent sets violating the OGP, leading to a contradiction. This proof technique was subsequently extended in two important ways. First, Rahman and Vir\'ag~\cite{half-opt} improved the threshold for failure of local algorithms down to $\frac{\log d}{d} n$, which is optimal. The proof involves establishing a more intricate ``forbidden'' structure that involves many independent sets with a particular intersection pattern (in contrast to the OGP, which involves only two sets). Again, a hypothetical local algorithm can be used to construct this forbidden structure, leading to a contradiction. This idea inspired further work in the area of random constraint satisfaction problems~\cite{nae-sat,walksat}. A separate line of work~\cite{max-cut-local,GJ-amp,GJW-lowdeg} extended the ideas of Gamarnik and Sudan~\cite{GS-local} in a different direction: instead of the basic OGP discussed above, they consider an ``ensemble'' variant of OGP in which a particular overlap between two large independent sets is forbidden even when the independent sets do not come from the same graph but from two correlated random graphs. This variant of OGP can be used not only to rule out local algorithms, but also to rule out any sufficiently ``stable'' algorithm (which roughly means that a small change to the input only causes a small change to the output); this idea was first discovered by~\cite{GJ-amp} and was later used by~\cite{GJW-lowdeg} to show that low-degree algorithms---which are stable---cannot find independent sets larger than $(1+1/\sqrt{2}) \frac{\log d}{d} n$.

To prove our impossibility result, we combine the two main ideas discussed above: we consider a forbidden structure that involves many independent sets and also involves many correlated random graphs. The crux of the proof lies in the specific choice of this forbidden structure (see Proposition~\ref{prop:forbid}), which is carefully chosen so that (i) with high probability, no instance of this structure occurs, and (ii) a hypothetical stable algorithm can be used to construct an instance of this structure, leading to a contradiction. On a technical level, our forbidden structure is quite different from the one used by Rahman and Vir\'ag~\cite{half-opt} in that theirs is highly symmetric, e.g., any two of the sets involved have the same intersection size. This is suitable for their purposes because due to special properties of local algorithms, a hypothetical local algorithm can be used to construct such a symmetric collection of sets. In our case, however, it is not clear that a hypothetical low-degree algorithm can be used to construct a symmetric collection of sets; we instead define a new class of forbidden structures that are not necessarily symmetric. Finally, we remark that the only property of low-degree polynomials that we use is their ``stability'' (in the sense of Proposition~\ref{prop:stable}), and so the proof actually rules out all ``stable'' algorithms.

\subsection{Extensions and Future Directions}
\label{sec:ext}

In this work we have given the first techniques for obtaining sharp impossibility results for low-degree algorithms in random optimization problems (with no planted signal). Hopefully these techniques can be adapted to other non-planted settings such as random constraint satisfaction problems (e.g.~\cite{AC-barriers,nae-sat,walksat}) and spin glass optimization problems~\cite{subag,sk-opt,GJ-amp,EMS-opt,GJW-lowdeg}. Low-degree algorithms are a promising candidate for a unified framework to explain computational hardness in a wide array of non-planted problems, analogous to the more established low-degree framework for planted problems.

One possible extension of our results is to consider the same independent set problem but in denser graphs. For instance, in $G(n,1/2)$ the largest independent set has size $2 \log_2 n$, but the best known polynomial-time algorithm is a simple greedy algorithm which can find an independent set of half-optimal size $\log_2 n$~\cite{karp}. An argument nearly identical to the proof of Theorem~\ref{thm:main-lower} yields the following result which shows that low-degree algorithms cannot improve upon this.

\begin{theorem}
For any $\eps > 0$ there exists $n^* > 0$, $\tilde\eta > 0$, $C_1 > 0$, and $C_2 > 0$ (depending on $\eps$) such that the following holds. Let $n \ge n^*$, $\gamma \ge 1$, and $1 \le D \le \frac{C_1 \log^2 n}{\gamma}$, and suppose $\delta \ge 0$ satisfies
\[ \delta \le \exp(-C_2 \gamma D - 2 \log n). \]
Then for $k = (1+\eps) \log_2 n$ and $\eta = \tilde\eta \log_2 n / n$, there is no random degree-$D$ polynomial that $(k,\delta,\gamma,\eta)$-optimizes the independent set problem in $G(n,1/2)$.
\end{theorem}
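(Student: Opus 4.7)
The plan is to run the proof of Theorem~\ref{thm:main-lower} essentially line by line, with $G(n,d/n)$ replaced by $G(n,1/2)$ and the scaling of all parameters updated accordingly. The three ingredients of the sparse argument are (i) an interpolation path between correlated input graphs, (ii) the stability of degree-$D$ polynomials under this interpolation, and (iii) a first-moment bound ruling out the forbidden structure of Proposition~\ref{prop:forbid}. Ingredients (i) and (ii) transfer without essential change: the interpolation can be defined coordinatewise (resampling a small batch of Bernoulli$(1/2)$ edges at each step), and stability of degree-$D$ polynomials on Bernoulli$(1/2)$ inputs follows from the same hypercontractive estimate underlying Proposition~\ref{prop:stable}; indeed the noise operator for $p=1/2$ has strictly better constants than for $p=d/n$.

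The substantive change is in step (iii), the first-moment calculation, because both the independent-set size $k = (1+\eps)\log_2 n$ and the error tolerance $\eta n = \tilde\eta\log_2 n$ are now at a logarithmic rather than linear scale in $n$. I would redefine the forbidden structure accordingly: a collection of $T = T(\eps)$ independent sets $S_1,\ldots,S_T$ of size at least $k - \tilde\eta\log_2 n$, distributed across various points of the interpolation path, with pairwise intersections lying in a prescribed (not necessarily symmetric) window strictly between the ``trivial'' overlap $k^2/n$ and the ``full'' overlap $k$. The specific window must be chosen so that, precisely as in the sparse proof, a hypothetical $(k,\delta,\gamma,\eta)$-optimizer together with the stability of degree-$D$ polynomials forces the existence of such a configuration along a typical interpolation path.

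The first-moment bound reduces to a direct calculation: for $k = (1+\eps)\log_2 n$, the expected number of $T$-tuples of independent sets of size $k$ in $G(n,1/2)$ with prescribed intersection pattern is at most a product of binomial coefficients times $2^{-\left(T\binom{k}{2} - \sum_{i<j}\binom{|S_i \cap S_j|}{2}\right)}$. For $\eps > 0$, each isolated set contributes a factor like $2^{-\Omega(\eps\log_2^2 n)}$ savings compared to the number of candidate tuples; a careful choice of overlap window ensures this remains true in the correlated setting, and integrating over interpolation steps contributes at most a $\mathrm{poly}(n)$ factor. This accounts for the admissible range $D \lesssim \log^2 n/\gamma$ (since stability loses a factor $\gamma D$ that must be beaten by first-moment savings of order $\log^2 n$) and for the $-2\log n$ correction in the bound on $\delta$ (which absorbs a union bound over the $O(n)$ vertex memberships required to identify a putative forbidden configuration).

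The main obstacle is the joint calibration of the forbidden overlap window: it must simultaneously be large enough that the interpolation-plus-stability argument actually produces a configuration lying in it from a hypothetical algorithm, and constrained enough that the first moment of such configurations is $o(1)$. Once this calibration is in place, the contradiction is deduced exactly as in the sparse proof, and since the only properties of low-degree polynomials used are stability and the hypercontractive moment bound, no new conceptual steps are required beyond bookkeeping the parameters.
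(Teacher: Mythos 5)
Your proposal matches the paper's intended argument: the paper gives no separate proof for $G(n,1/2)$, only the remark that the proof of Theorem~\ref{thm:main-lower} carries over with the independent-set scale changed from $\frac{\log d}{d}n$ to $\log_2 n$, and your parameter accounting is the right one ($D \lesssim \log^2 n/\gamma$ from balancing the $\exp(-O(\gamma D))$ stability cost of Proposition~\ref{prop:stable} at $p=1/2$ against first-moment savings of order $\exp(-\Omega(\eps\log^2 n))$, and $\eta n = \tilde\eta \log_2 n$ mirroring $\eta n = \frac{\eps}{16}\Phi$). Two small bookkeeping corrections: the $-2\log n$ in the bound on $\delta$ absorbs the union bound over the $\Theta(n^2)$ timesteps of the interpolation path at which $|U_t| \ge k$ must hold (not a union bound over vertex memberships), and the forbidden structure should be kept in the paper's sequential form $|S_k \setminus \bigcup_{\ell<k} S_\ell| \in [\frac{\eps}{4}k, \frac{\eps}{2}k]$ rather than constraining pairwise intersections, since the non-edge count $\binom{|S_1|}{2} + \sum_{k\ge 2} |S_k \setminus \bigcup_{\ell<k} S_\ell| \cdot |S_k \cap \bigcup_{\ell<k} S_\ell|$ avoids the double-counting issues that the expression $\sum_{i<j}\binom{|S_i\cap S_j|}{2}$ would introduce for $K>2$ sets.
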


\noindent However, the matching achievability result remains open: it is not clear how to write the greedy algorithm as a low-degree polynomial or otherwise give a low-degree algorithm that finds an independent set of size $(1-\eps) \log_2 n$. We expect that it should be possible to obtain such a low-degree algorithm (perhaps of degree $D = O(\log n)$ and failure probability $\delta = \exp(-\Omega(\log^2 n))$) via the approximate message passing framework, which has been successful in other non-planted settings~\cite{nonneg,sk-opt,EMS-opt,perceptron}.

\subsection*{Notation}

Asymptotic notation such as $o(1)$ or $\Omega(n)$ pertains to the limit $n \to \infty$ with all other parameters (such as $d$) held fixed; in other words, parameters not depending on $n$ are considered ``constants'' and may be hidden by this notation. On the other hand, $o_d(1)$ denotes a quantity that depends on $d$ but not $n$, and tends to $0$ as $d \to \infty$ (with all other parameters held fixed).

Throughout, we will use the shorthand $m = \binom{n}{2}$ and $\Phi = \frac{\log d}{d} n$. We define $[n] = \{1,2,\ldots,n\}$ and use $\|\cdot\|$ for the $\ell^2$-norm of a vector. All logarithms use the natural base unless stated otherwise. All graphs are assumed to have no self-loops nor parallel edges.

\section{Proof of Impossibility}\label{sec:lower}

In this section we prove our main impossibility result (Theorem~\ref{thm:main-lower}) which shows that low-degree algorithms cannot find independent sets of size $(1+\eps) \frac{\log d}{d} n$.

\subsection{Interpolation Path}

Here we define a sequence of correlated random graphs that will be central to the argument. We will represent a graph on vertex set $[n]$ by $Y \in \{0,1\}^m$ where $m = \binom{n}{2}$. Here $Y_1,\ldots,Y_m$ are indicator variables for the edges (where $0$ indicates a non-edge and $1$ indicates an edge), listed in some fixed but arbitrary order.

\begin{definition}\label{def:interp}
For $T \in \NN$, consider the \emph{length-$T$ interpolation path} $Y^{(0)},\ldots,Y^{(T)}$ sampled as follows. First, $Y^{(0)} \sim G(n,d/n)$. Then for $1 \le t \le T$, $Y^{(t)}$ is obtained from $Y^{(t-1)}$ by resampling coordinate $\sigma(t) \in [m]$ from $\mathrm{Bernoulli}(d/n)$. Here $\sigma(t) = t - k_t m$ where $k_t$ is the unique integer for which $1 \le \sigma(t) \le m$.
\end{definition}

\subsection{Forbidden Structures}

The proof will hinge on the non-existence of certain structures (primarily the one defined in Proposition~\ref{prop:forbid}) with high probability over the interpolation path. The following standard bounds will be used repeatedly:
\begin{equation}\label{eq:binom-bound}
\binom{n}{k} \le \left(\frac{en}{k}\right)^k \qquad \text{for all integers } 1 \le k \le n,
\end{equation}
\begin{equation}\label{eq:log-bound}
(1-x)^r \le \exp(-rx) \qquad \text{for all } x \in \RR,\, r > 0. 
\end{equation}

\noindent We start with a well-known upper bound on the size of the maximum independent set in $G(n,d/n)$.

\begin{lemma}\label{lem:max-size}
Fix $\eps > 0$. If $d > 0$ is a sufficiently large constant (depending on $\eps$), then with probability $1-\exp(-\Omega(n))$ there is no independent set in $G(n,d/n)$ of size exceeding $(2+\eps) \frac{\log d}{d}n$.
\end{lemma}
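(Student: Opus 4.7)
The plan is a standard first moment argument. Set $k = \lceil (2+\eps)\frac{\log d}{d}n \rceil$ and let $X_k$ count the size-$k$ vertex subsets that induce an independent set in $Y \sim G(n,d/n)$. Any independent set of size exceeding $(2+\eps)\frac{\log d}{d}n$ contains a size-$k$ independent subset, so by Markov's inequality it suffices to show $\EE[X_k] \le \exp(-\Omega(n))$.

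Writing $\EE[X_k] = \binom{n}{k}(1-d/n)^{\binom{k}{2}}$ and applying the two bounds \eqref{eq:binom-bound} and \eqref{eq:log-bound} stated just above yields
\[
\EE[X_k] \;\le\; \exp\!\left(k\log(en/k) \;-\; \binom{k}{2}\tfrac{d}{n}\right).
\]
Substituting $k = (1+o(1))(2+\eps)\frac{\log d}{d}n$, one computes $\log(en/k) = \log d - \log\log d - \log(2+\eps) + 1$ and $\binom{k}{2}\frac{d}{n} = \tfrac{k-1}{2}(2+\eps)\log d = (1+\tfrac{\eps}{2})k\log d + O(\log d)$. Subtracting, the exponent reduces to
\[
-\tfrac{\eps}{2}\,k\log d \;+\; k\bigl[1 - \log(2+\eps) - \log\log d\bigr] \;-\; O(\log d).
\]
For any fixed $\eps > 0$ and $d$ exceeding some threshold $d^*(\eps)$, both the $-\tfrac{\eps}{2}k\log d$ term and (once $d > e$) the $-k\log\log d$ term are negative and dominate the remaining $O(k)$ and $O(\log d)$ corrections, so the exponent is at most $-\tfrac{\eps}{4}k\log d = -\tfrac{\eps(2+\eps)}{4}\tfrac{(\log d)^2}{d}\,n = -\Omega(n)$, with the implicit constant depending on $\eps$ and $d$. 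Feeding this back into Markov gives $\PP[X_k \ge 1] \le \exp(-\Omega(n))$, which is the claim.

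There is no genuine obstacle; the only delicate point is book-keeping the lower-order contributions (the $\log\log d$ and the additive constants) so that the dominant $-\tfrac{\eps}{2}\log d$ saving per vertex survives. This is precisely what forces $d$ to be taken as a sufficiently large constant depending on $\eps$, exactly as in the statement.
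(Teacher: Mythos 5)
Your proposal is correct and is essentially the same argument as the paper's: a first moment bound on the number of independent sets of size $k=\lceil(2+\eps)\tfrac{\log d}{d}n\rceil$ using the bounds \eqref{eq:binom-bound} and \eqref{eq:log-bound}, with the $-\tfrac{\eps}{2}k\log d$ saving absorbed by taking $d$ large depending on $\eps$. The paper merely packages the same computation as $\exp[\Phi\log d\,(a-a^2/2+o(1)+o_d(1))]$ with $a\ge 2+\eps$, which is your exponent in different notation.
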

\begin{proof}
Let $\Phi = \frac{\log d}{d} n$ and define $a \ge 2+\eps$ so that $a\Phi = \lceil (2+\eps)\Phi \rceil$. Let $N$ denote the number of independent sets of size exactly $a\Phi$; our goal is to show $N = 0$ with high probability. The proof will use a simple first moment method: we compute $\EE[N]$ and show that it is exponentially small. We have
\begin{align*}
\EE[N] &= \binom{n}{a \Phi} (1-d/n)^{\binom{a\Phi}{2}} \\
&\le \left(\frac{en}{a\Phi}\right)^{a\Phi} \exp\left(-\frac{d}{n}\binom{a\Phi}{2}\right) & &\text{using \eqref{eq:binom-bound} and \eqref{eq:log-bound}} \\
&= \exp\left[a \Phi \log\left(\frac{ed}{a \log d}\right) - \frac{d a^2 \Phi^2}{2n} + O(1)\right] \\
&= \exp\left[\Phi \log d \left(a - \frac{a^2}{2} + o(1) + o_d(1)\right)\right] \\
&\le \exp\left[\Phi \log d \left(-\eps + o(1) + o_d(1)\right)\right] & &\text{using } a \ge 2+\eps \\
&= \exp(-\Omega(n))
\end{align*}
for sufficiently large $d$. The result follows by Markov's inequality.
\end{proof}

\noindent The forbidden structure defined in the following result will be the crux of the proof.

\begin{proposition}\label{prop:forbid}
Fix constants $\eps > 0$ and $K \in \NN$ with $K \ge 1 + 5/\eps^2$. Consider the interpolation path $Y^{(0)},\ldots,Y^{(T)}$ from Definition~\ref{def:interp}, of any length $T = n^{O(1)}$. If $d > 0$ is a sufficiently large constant (depending on $\eps, K$), then with probability $1-\exp(-\Omega(n))$ there does not exist a sequence of sets $S_1,\ldots,S_K \subseteq [n]$ satisfying the following properties:
\begin{enumerate}
    \item[(i)] for each $k \in [K]$ there exists $0 \le t_k \le T$ such that $S_k$ is an independent set in $Y^{(t_k)}$,
    \item[(ii)] $|S_k| \ge (1+\eps) \frac{\log d}{d} n$ for all $k \in [K]$,
    \item[(iii)] and $|S_k \setminus (\cup_{\ell < k}\, S_\ell)| \in \left[\frac{\eps}{4} \frac{\log d}{d} n,\frac{\eps}{2} \frac{\log d}{d} n\right]$ for all $2 \le k \le K$.
\end{enumerate}
\end{proposition}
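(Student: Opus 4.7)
The plan is to prove this by a first moment calculation, analogous to Lemma~\ref{lem:max-size} but with a much more intricate combinatorial structure. First I would apply Lemma~\ref{lem:max-size} together with a union bound over the $T+1 = n^{O(1)}$ time steps of the interpolation path to reduce, up to a $1 - \exp(-\Omega(n))$ event, to the case where every independent set appearing in any $Y^{(t)}$ has size at most $(2+\eps)\Phi$. Writing $s_k := |S_k|$, $r_k := |S_k \setminus U_{k-1}|$ and $U_k := \bigcup_{\ell \le k} S_\ell$, under this event each of $s_k, r_k$ takes only $O(n)$ values and $|U_{k-1}| = O(\Phi)$. It therefore suffices to bound the expected number of valid tuples for fixed $(t_k, s_k, r_k)_{k=1}^K$ and union-bound over the polynomially many parameter choices.

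For fixed parameters I would bound
\[
\EE\big[\#\text{valid tuples}\big] \le \binom{n}{s_1}\prod_{k=2}^K \binom{n}{r_k}\binom{|U_{k-1}|}{s_k - r_k} \cdot \prob\big[\,S_k \text{ independent in } Y^{(t_k)} \text{ for all } k\big].
\]
The crux is handling the joint probability. For $k \ge 2$ define the \emph{new-edge set} $E_k^{\mathrm{new}} := \{\{i,j\}\subseteq S_k : \{i,j\}\cap(S_k \setminus U_{k-1}) \neq \emptyset\}$, and set $E_1^{\mathrm{new}} := \binom{S_1}{2}$. Any edge in $E_k^{\mathrm{new}}$ with $k \ge 2$ has an endpoint absent from every earlier $S_\ell$, so the $E_k^{\mathrm{new}}$ are pairwise disjoint; moreover $|E_1^{\mathrm{new}}| = \binom{s_1}{2}$ and $|E_k^{\mathrm{new}}| \ge r_k(s_k - r_k)$ for $k \ge 2$. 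Since distinct edge coordinates of the interpolation path $(Y^{(0)},\ldots,Y^{(T)})$ are mutually independent and the marginal of $Y^{(t)}_e$ is $\mathrm{Bernoulli}(d/n)$ at any single time, the joint probability is bounded by $(1 - d/n)^{\sum_k |E_k^{\mathrm{new}}|}$.

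Combining these bounds via~\eqref{eq:binom-bound} and~\eqref{eq:log-bound} and writing $\alpha_k := s_k/\Phi$, $\beta_k := r_k/\Phi$, the logarithm of the expected count becomes at most
\[
\Phi \log d\cdot \Big[\alpha_1(1-\alpha_1/2) + \sum_{k=2}^K \beta_k(1-\alpha_k+\beta_k)\Big] + o_d(\Phi \log d) + O(K\Phi) + O(\log n).
\]
Over the allowed ranges $\alpha_k \ge 1+\eps$ and (for $k \ge 2$) $\beta_k \in [\eps/4, \eps/2]$, routine single-variable calculus shows the first term is maximized at $\alpha_1 = 1+\eps$, giving $(1-\eps^2)/2$, and each summand is maximized at $(\alpha_k,\beta_k) = (1+\eps, \eps/4)$, giving $-3\eps^2/16$. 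The hypothesis $K \ge 1 + 5/\eps^2$ then forces the bracketed expression to be $\le \tfrac{1}{2} - \tfrac{15}{16} = -\tfrac{7}{16}$, so for $d$ sufficiently large (and noting $\Phi \log d = \Theta(n)$) the expected count is $\exp(-\Omega(n))$. Markov's inequality and the polynomial union bound finish the argument.

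The main obstacle is coupling the $K$ independence constraints across $K$ distinct but correlated graphs $Y^{(t_k)}$: their joint probability does not factor naively. The saving observation is that distinct edge coordinates of the full interpolation path are mutually independent, so restricting attention to the disjoint collection $\{E_k^{\mathrm{new}}\}$ reduces the joint event to a product of Bernoulli marginals. The combinatorial tuning---the window $[\eps/4, \eps/2]$ for $\beta_k$, together with the threshold $K \ge 1 + 5/\eps^2$---is precisely what makes the $(K-1)$ incremental contributions accumulate enough negative exponent to overwhelm the configurational entropy $(1-\eps^2)/2$ of the first set, which is the obstruction that Rahman--Vir\'ag overcame differently using a highly symmetric forbidden structure.
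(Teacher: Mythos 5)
Your proposal is correct and follows essentially the same route as the paper: a first moment bound over the polynomially many choices of $(t_k,s_k,r_k)$, using Lemma~\ref{lem:max-size} to cap $|S_k|$, the same disjoint ``new-edge'' sets (edges of $S_k$ meeting $S_k \setminus U_{k-1}$) to factor the joint independence event across the correlated graphs into a product of single-edge Bernoulli marginals, and the same entropy--energy optimization $a_1 - a_1^2/2 - \sum_{k\ge 2} b_k(a_k-b_k-1)$ driven negative by $K \ge 1+5/\eps^2$. The only differences are cosmetic: you optimize the exponent slightly more sharply (obtaining $-7/16$ where the paper settles for $-1/8$ via $\sup_a(a-a^2/2)=1/2$ and the cruder per-term bound $\eps^2/8$).
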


\begin{proof}
Let $N$ denote the number of sequences $(S_1,\ldots,S_K)$ satisfying the properties (i)-(iii). The proof will use the first moment method: we compute $\EE[N]$ and show that it is exponentially small. Let $\Phi = \frac{\log d}{d} n$. Let $a_k$ and $b_k$ be defined by $|S_k| = a_k \Phi$ and $|S_k \setminus (\cup_{\ell < k}\, S_\ell)| = b_k \Phi$, and note that (ii) and (iii) state that $a_k \ge 1+\eps$ and $b_k \in \left[\frac{\eps}{4},\frac{\eps}{2}\right]$. Also let $c$ be defined by $|\cup_k S_k| = c \Phi$, and note that (iii) implies $c \le a_1 + (K-1) \frac{\eps}{2}$. By Lemma~\ref{lem:max-size}, we can assume $a_k \le 2+\eps$. Thus, $c$ is upper-bounded by a constant $C(\eps,K)$ that does not depend on $d$. We need to count the number of sequences $(S_1,\ldots,S_K)$. There are at most $n^{2K}$ choices for the values $\{a_k\}$ and $\{b_k\}$. Once $\{a_k\}$ and $\{b_k\}$ are fixed, the number of ways to choose $\{S_k\}$ is at most
\begin{align*}
\binom{n}{a_1 \Phi} \prod_{k=2}^K \binom{n}{b_k \Phi}&\binom{c\Phi}{(a_k-b_k)\Phi} \le \left(\frac{en}{a_1 \Phi}\right)^{a_1\Phi} \prod_{k=2}^K \left(\frac{en}{b_k \Phi}\right)^{b_k\Phi} \left(\frac{ec}{a_k-b_k}\right)^{(a_k-b_k)\Phi} \qquad\qquad \text{using \eqref{eq:binom-bound}} \\
&= \exp\left\{a_1 \Phi \log\left(\frac{ed}{a_1 \log d}\right) + \sum_{k=2}^K \left[ b_k\Phi \log\left(\frac{ed}{b_k \log d}\right) + (a_k-b_k)\Phi \log\left(\frac{ec}{a_k-b_k}\right)\right]\right\} \\
&= \exp\left\{\Phi \log d \left(a_1 + \sum_{k=2}^K b_k + o_d(1)\right)\right\}
\end{align*}
where we have used $a_k \in [1 + \eps, 2 + \eps]$, $b_k \in \left[\frac{\eps}{4},\frac{\eps}{2}\right]$, and $c \le C(\eps,K)$ to conclude that certain terms are $o_d(1)$.

Now for a fixed $\{S_k\}$ satisfying (ii) and (iii), we need to upper-bound the probability that (i) is satisfied. We will take a union bound over the possible choices of $\{t_k\}$ in property (i); there are $(T+1)^K$ such choices. Let $E$ be the number of edges $j \in \binom{[n]}{2}$ of the complete graph such that there exists $k$ such that both endpoints of $j$ lie within $S_k$. For fixed $\{S_k\}$ and $\{t_k\}$, property (i) occurs iff a certain collection of (at least) $E$ independent non-edges occur in the sampling of $\{Y^{(t)}\}$; this happens with probability at most $(1-d/n)^E \le \exp(-E d/n)$. Furthermore, we have
\begin{align*}
E \ge \binom{a_1 \Phi}{2} + \sum_{k=2}^K b_k (a_k - b_k) \Phi^2 &= \frac{a_1^2 \Phi^2}{2} - O(n) + \sum_{k=2}^K b_k (a_k - b_k) \Phi^2 \\
&= \frac{n}{d} \cdot \Phi \log d \left(\frac{a_1^2}{2} + \sum_{k=2}^K b_k(a_k-b_k) - o(1)\right),
\end{align*}
where in the first step, the first term counts edges within $S_1$ and the $k$th term of the sum counts edges within $S_k$ that have exactly one endpoint in $\cup_{\ell < k}\, S_\ell$. (Note that no edges are double-counted here.)

Putting it all together, we have
\begin{align*}
\EE[N] &\le n^{2K} (T+1)^K \sup_{\{a_k\},\{b_k\}}\exp\left\{\Phi \log d \left(a_1 + \sum_{k=2}^K b_k + o_d(1)\right)\right\} \exp\left(-\frac{d}{n} E\right) \\
\intertext{where $\{a_k\}$ and $\{b_k\}$ are subject to the constraints $a_k \ge 1+\eps$ and $b_k \in \left[\frac{\eps}{4},\frac{\eps}{2}\right]$}
&\le n^{2K} (T+1)^K \sup_{\{a_k\},\{b_k\}}\exp\left\{\Phi \log d \left(a_1 + \sum_{k=2}^K b_k - \frac{a_1^2}{2} - \sum_{k=2}^K b_k(a_k-b_k) + o(1) + o_d(1)\right)\right\} \\
&= n^{2K} (T+1)^K \sup_{\{a_k\},\{b_k\}}\exp\left\{\Phi \log d \left(a_1 - \frac{a_1^2}{2} - \sum_{k=2}^K b_k(a_k-b_k-1) + o(1) + o_d(1)\right)\right\} \\
&\le n^{2K} (T+1)^K \exp\left\{\Phi \log d \left(\frac{1}{2} - \sum_{k=2}^K \frac{\eps^2}{8} + o(1) + o_d(1)\right)\right\}
\intertext{where we have used the fact $\sup_{a \in \RR} \left(a - \frac{a^2}{2}\right) = \frac{1}{2}$ along with $a_k \ge 1+\eps$ and $b_k \in \left[\frac{\eps}{4},\frac{\eps}{2}\right]$}
&\le n^{2K} (T+1)^K \exp\left\{\Phi \log d \left(-\frac{1}{8} + o(1) + o_d(1)\right)\right\}
\intertext{where we have used $K \ge 1 + 5/\eps^2$}
&= \exp(-\Omega(n))
\end{align*}
for sufficiently large $d$. The result follows by Markov's inequality.
\end{proof}

\noindent Finally, we will need the following simple result which states that no independent set of $G(n,d/n)$ has large intersection with a fixed set of vertices.

\begin{lemma}\label{lem:spread}
Fix constants $\eps > 0$ and $a > 0$. Fix $S \subseteq [n]$ with $|S| \le a \,\frac{\log d}{d} n$. If $d > 0$ is a sufficiently large constant (depending on $\eps,a$), then with probability $1-\exp(-\Omega(n))$ there is no independent set $S'$ in $G(n,d/n)$ satisfying $|S \cap S'| \ge \eps\, \frac{\log d}{d} n$.
\end{lemma}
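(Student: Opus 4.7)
The plan is a direct first-moment method, structurally identical to the proof of Lemma~\ref{lem:max-size}. The first step is a trivial reduction: if some independent set $S'$ in $Y \sim G(n,d/n)$ satisfies $|S \cap S'| \ge \eps \Phi$ (writing $\Phi = \tfrac{\log d}{d} n$ as in the paper), then $T := S \cap S'$ is itself an independent set, is contained in $S$, and has size at least $\eps \Phi$. Any superset property is irrelevant, so it suffices to rule out the existence of an independent set $T \subseteq S$ of size exactly $k := \lceil \eps \Phi \rceil$, since we can always pass to a subset.

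Next I would let $N$ count the number of subsets $T \subseteq S$ of size $k$ that form an independent set in $Y$, and bound
\[
\EE[N] \;\le\; \binom{|S|}{k} (1-d/n)^{\binom{k}{2}} \;\le\; \left(\frac{e|S|}{k}\right)^{k} \exp\!\left(-\frac{d}{n}\binom{k}{2}\right),
\]
using \eqref{eq:binom-bound} and \eqref{eq:log-bound}. Plugging in $|S| \le a\Phi$ and $k = \lceil \eps \Phi \rceil$, the logarithm of this bound is
\[
k \log\!\left(\frac{ea}{\eps}\right) - \frac{d}{n}\cdot \frac{k^2}{2} + O(k d/n)
\;=\; \eps\,\Phi \log\!\left(\frac{ea}{\eps}\right) - \frac{\eps^2}{2}\,\Phi \log d + o(\Phi \log d),
\]
where I used $\Phi d/n = \log d$ and that $k/\Phi \to \eps$. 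Factoring out $\Phi \log d$ gives
\[
\log \EE[N] \;\le\; \Phi \log d \left(-\frac{\eps^2}{2} + o_d(1) + o(1)\right),
\]
since $\log(ea/\eps)/\log d = o_d(1)$ for fixed $\eps, a$.

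For $d$ sufficiently large (depending on $\eps, a$), this is at most $-\tfrac{\eps^2}{4}\Phi \log d = -\Omega(n)$, where I use $\Phi \log d = n (\log d)^2/d = \Omega(n)$ for fixed $d$. Markov's inequality then gives $\Pr[N \ge 1] \le \EE[N] \le \exp(-\Omega(n))$, completing the proof. There is no real obstacle here: the reduction to ``independent set contained in $S$'' is immediate from heredity of independence, and the computation is a one-parameter specialization of the estimate already carried out in Lemma~\ref{lem:max-size} and Proposition~\ref{prop:forbid}.
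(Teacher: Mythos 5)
Your proposal is correct and follows essentially the same route as the paper: reduce to counting independent subsets of $S$ of size exactly $\lceil \eps \Phi\rceil$, apply the first moment method with the bounds \eqref{eq:binom-bound} and \eqref{eq:log-bound}, and conclude via Markov. The computation matches the paper's line for line (the paper writes $b\Phi = \lceil\eps\Phi\rceil$ but the estimates are identical).
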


\begin{proof}
The proof is similar to that of Lemma~\ref{lem:max-size}. As usual, define $\Phi = \frac{\log d}{d} n$. We again use the first moment method. Let $N$ be the number of subsets $U \subseteq S$ such that $|U| = \lceil\eps\Phi\rceil =: b\Phi$ and $U$ is an independent set in $G(n,d/n)$; it is sufficient to show $N = 0$ with high probability. We have
\begin{align*}
\EE[N] &= \binom{|S|}{b\Phi} (1-d/n)^{\binom{b\Phi}{2}} \\
&\le \left(\frac{ea}{b}\right)^{b\Phi} \exp\left(-\frac{d}{n} \binom{b\Phi}{2}\right) & &\text{using \eqref{eq:binom-bound} and \eqref{eq:log-bound}} \\
&= \exp\left[b\Phi \log\left(\frac{ea}{b}\right) - \frac{db^2\Phi^2}{2n} + O(1)\right] \\
&= \exp\left[\Phi \log d \left(-\frac{b^2}{2} + o(1) + o_d(1)\right)\right] & &\text{using } b \in [\eps,a] \\
&= \exp(-\Omega(n))
\end{align*}
for sufficiently large $d$. The result follows by Markov's inequality.
\end{proof}

\subsection{Stability of Low-Degree Polynomials}

The main result of this section (Proposition~\ref{prop:stable}) states that the output of a low-degree polynomial is resilient to changes in the input, in a particular sense. Throughout this section we will use the shorthand $p := d/n$. We think of $Y \sim G(n,p)$ as simply $Y \in \{0,1\}^m$ with i.i.d.\ $\mathrm{Bernoulli}(p)$ coordinates; the graph structure will not be used in this section. We consider the hypercube graph with vertex set $\{0,1\}^m$ and an edge $(y,y')$ whenever $y,y'$ differ on exactly one coordinate.

\begin{definition}\label{def:bad}
Let $f:\{0,1\}^m\to\RR^n$ and let $c>0$. 
An edge $(y,y')$ of the hypercube $\{0,1\}^m$ is said to be \emph{$c$-bad}
for $f$ if 
\[
\|f(y)-f(y')\|^2 \geq c\, \EE_{Y \sim G(n,p)}\left[\|f(Y)\|^2\right].
\]
Also, for $y \in \{0,1\}^m$, let $B_i(y)$ denote the event that the edge traversed by flipping the $i$th coordinate of $y$ is $c$-bad for $f$.
\end{definition}

\noindent The interpolation path (Definition~\ref{def:interp}) can be thought of as a random walk on the hypercube graph (which is allowed to either remain in place or traverse an edge at each step). The following main result of this section shows that with non-trivial probability, this walk encounters no bad edges. This result is similar to Theorem~4.2 of~\cite{GJW-lowdeg} (which corresponds to the case $L=1$).

\begin{proposition}\label{prop:stable}
Let $L \in \NN$ and $c > 0$. Consider the interpolation path $Y^{(0)},\ldots,Y^{(T)}$ from Definition~\ref{def:interp} of length $T = Lm$, with $p := d/n \le 1/2$. Let $f: \{0,1\}^m \to \RR^n$ be a degree-$D$ polynomial. Then
\[ \prob\left[\text{no edge of $Y^{(0)},\ldots,Y^{(T)}$ is $c$-bad for $f$}\right] \ge p^{4LD/c}. \]
\end{proposition}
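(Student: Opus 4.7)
The plan is to lower-bound $\prob_\pi[E]$ (where $E$ is the event that no edge of the interpolation path is $c$-bad) by a change-of-measure argument, extending the $L=1$ case from \cite{GJW-lowdeg}. I introduce a modified probability measure $\tilde\pi$ on the path $(Y^{(0)},\ldots,Y^{(T)})$: sample $Y^{(0)}\sim\mathrm{Bernoulli}(p)^m$ as in $\pi$, and at each step $t$ with $i=\sigma(t)$, if the would-be edge is $c$-bad (i.e., $\|f(Y^{(t-1)}_{-i},1)-f(Y^{(t-1)}_{-i},0)\|^2 \ge c\,\EE\|f(Y)\|^2$) then force $Y^{(t)}_i=Y^{(t-1)}_i$; otherwise sample $Y^{(t)}_i\sim\mathrm{Bernoulli}(p)$ as under $\pi$. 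Call such steps \emph{would-bad}. By construction $\tilde\pi$ is supported on $E$, so $\prob_\pi[E] = \EE_{\tilde\pi}[d\pi/d\tilde\pi]$.

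The key structural observation, to be established by induction on $t$, is that under $\tilde\pi$ the vector $Y^{(t)}$ is still distributed as $\mathrm{Bernoulli}(p)^m$ with independent coordinates. This is because at each step, coordinate $\sigma(t)$ either gets freshly resampled from $\mathrm{Bernoulli}(p)$, or inherits its previous value, which by the inductive hypothesis was itself $\mathrm{Bernoulli}(p)$ and independent of $Y^{(t-1)}_{-i}$. Next I compute the Radon--Nikodym derivative on $E$: $\pi$ and $\tilde\pi$ agree at non-would-bad steps, and at a would-bad step $\tilde\pi$ is deterministic while $\pi$ assigns probability $p$ (resp.\ $1-p$) to the ``stay'' outcome when $Y^{(t-1)}_i=1$ (resp.\ $0$). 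Writing $W(\omega)$ for the set of would-bad steps along a path $\omega\in E$,
\[
\log\frac{d\pi}{d\tilde\pi}(\omega) = \sum_{t\in W(\omega)} \bigl[Y^{(t-1)}_{\sigma(t)}\log p + (1-Y^{(t-1)}_{\sigma(t)})\log(1-p)\bigr].
\]
Using the marginal invariance and the independence of $Y^{(t-1)}_i$ from $Y^{(t-1)}_{-i}$ under $\tilde\pi$, a termwise expectation then yields $\EE_{\tilde\pi}[\log(d\pi/d\tilde\pi)] = -H(p)\cdot L\sum_{i=1}^m \bar b_i$, where $H(p) = -p\log p - (1-p)\log(1-p)$ and $\bar b_i := \prob_{Y_{-i}\sim\mathrm{Bernoulli}(p)^{m-1}}[\|f(Y_{-i},1)-f(Y_{-i},0)\|^2 \ge c\,\EE\|f\|^2]$.

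To bound $\sum_i \bar b_i$, Markov gives $\bar b_i \le \EE\|\Delta_i f\|^2/(c\,\EE\|f\|^2)$ with $\Delta_i f(y_{-i}):=f(y_{-i},1)-f(y_{-i},0)$. The Poincar\'e-type inequality $\sum_i \mathrm{Inf}_i[f_j] \le D\,\mathrm{Var}[f_j]$ for degree-$D$ polynomials on the biased product measure, combined with the identity $\mathrm{Inf}_i[f_j] = p(1-p)\,\EE[(\Delta_i f_j)^2]$ and summation over output coordinates, yields $\sum_i \EE\|\Delta_i f\|^2 \le D\,\EE\|f\|^2/(p(1-p))$, and so $\sum_i \bar b_i \le D/(cp(1-p))$. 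Jensen's inequality then gives $\prob_\pi[E] \ge \exp(\EE_{\tilde\pi}[\log(d\pi/d\tilde\pi)]) \ge \exp(-LDH(p)/(cp(1-p)))$, and the elementary bounds $H(p)\le 2p\log(1/p)$ and $p(1-p)\ge p/2$ (both valid for $p\le 1/2$) imply $-LDH(p)/(cp(1-p))\ge -4LD\log(1/p)/c$, so $\prob_\pi[E]\ge p^{4LD/c}$ as claimed.

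The main subtlety I expect is the marginal invariance step: forcing coordinate $i$ to stay at would-bad steps could \emph{a priori} destroy the product Bernoulli structure, but the induction resolves this cleanly because the inherited value is already a Bernoulli$(p)$ draw independent of $Y^{(t-1)}_{-i}$ by the inductive hypothesis. It is also important that the $\log(1-p)$ contribution when $Y^{(t-1)}_i=0$ is tracked separately from the $\log p$ contribution when $Y^{(t-1)}_i=1$; merging them (i.e.\ using the crude bound $p^{|W|}$ for the Radon--Nikodym ratio) loses a factor of $1/p$ and gives only $p^{O(LD/(cp))}$ rather than the claimed $p^{4LD/c}$. Everything else---Markov, the influence sum, Jensen in the log-domain, and the arithmetic with $H(p)$---is standard.
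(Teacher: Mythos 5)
Your proof is correct and yields exactly the claimed bound, but it routes the key step through a change of measure rather than the paper's direct recursion. The paper's Lemma~\ref{lem:q} bounds $-\EE\log q(Y)$, where $q(y)$ is the conditional probability of avoiding bad edges given $Y^{(0)}=y$, by induction on $T$: at a bad step the factor $H(p)$ appears exactly, at a good step convexity of $-\log$ gives an inequality, and a single application of Jensen then converts this into a lower bound on $\prob[E]$. You instead construct the tilted path law $\tilde\pi$ supported on $E$, prove by induction that the tilted walk preserves the product $\mathrm{Bernoulli}(p)$ law at every time step (this marginal invariance is the analogue of, and the reason behind, the paper's per-step $H(p)$ accounting), compute $\mathrm{KL}(\tilde\pi\,\|\,\pi)=H(p)\sum_{t}\prob[B_{\sigma(t)}(Y)]$ exactly, and apply the same Jensen step in the form $\prob[E]\ge \exp(-\mathrm{KL}(\tilde\pi\,\|\,\pi))$. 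The downstream ingredients coincide: your Markov-plus-total-influence bound $\sum_i\bar b_i\le D/(cp(1-p))$ is the paper's Lemma~\ref{lem:total-inf} up to the harmless factor $2(1-p)$, and both proofs conclude with $H(p)\le-2p\log p$. What your formulation buys is an explicit entropy interpretation---the per-step bookkeeping becomes an identity rather than an inequality---at the cost of having to verify the marginal-invariance claim, which you do correctly; your closing remark about tracking the $\log p$ and $\log(1-p)$ contributions separately is precisely the role played by $H(p)$ in the paper's lemma.
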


\noindent The proof will follow from the following two lemmas. The first is essentially an upper bound on the total number (weighted by the measure $G(n,p)$) of bad edges that a low-degree polynomial can have. This was proved in~\cite{GJW-lowdeg} based on standard facts about the \emph{total influence} of low-degree polynomials.

\begin{lemma}[\cite{GJW-lowdeg} Lemma 4.3]
\label{lem:total-inf}
If $p \le 1/2$ and $f: \{0,1\}^m \to \RR^n$ is a degree-$D$ polynomial then
\begin{equation}\label{eq:total-inf}
\frac{cp}{2} \sum_{i=1}^m \prob_{Y \sim G(n,p)}[B_i(Y)] \le D,
\end{equation}
where $B_i(y)$ is defined in Definition~\ref{def:bad}.
\end{lemma}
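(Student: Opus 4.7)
The plan is to reduce the lemma to two standard ingredients: Markov's inequality and the total-influence bound for degree-$D$ functions on the $p$-biased hypercube. Note first that the event $B_i(y)$ does not actually depend on the bit $y_i$, only on the unordered pair $\{f(y^{(i,0)}), f(y^{(i,1)})\}$; writing $\partial_i f(y) := f(y^{(i,1)}) - f(y^{(i,0)})$ for the discrete partial derivative, this means $\prob_Y[B_i(Y)] = \prob_{Y_{-i}}[\|\partial_i f(Y)\|^2 \ge c\,\EE\|f\|^2]$. Markov's inequality then gives
\[
\prob[B_i(Y)] \;\le\; \frac{\EE\|\partial_i f(Y)\|^2}{c\,\EE\|f(Y)\|^2}.
\]

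Next I would translate $\EE\|\partial_i f\|^2$ into an influence. For a scalar $g:\{0,1\}^m\to\RR$, the $p$-biased influence of coordinate $i$ satisfies the identity $\mathrm{Inf}_i[g] = p(1-p)\,\EE[(\partial_i g)^2]$, since conditional on $Y_{-i}$ the random variable $g(Y)$ takes two values with probabilities $p$ and $1-p$. Applying this componentwise to $f=(f_1,\dots,f_n)$ and summing over $j$ and $i$,
\[
\sum_{i=1}^m \EE\|\partial_i f(Y)\|^2 \;=\; \frac{1}{p(1-p)} \sum_{j=1}^n \sum_{i=1}^m \mathrm{Inf}_i[f_j].
\]

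At this point I invoke the classical total-influence inequality for degree-$D$ polynomials on the $p$-biased hypercube: if $g$ has degree at most $D$, then $\sum_{i=1}^m \mathrm{Inf}_i[g] \le D\,\mathrm{Var}[g] \le D\,\EE[g^2]$. This follows from the $p$-biased Fourier expansion of $g$, since a character of degree $k$ contributes exactly $k$ times its squared coefficient to the total influence, and every such $k$ is at most $D$. Applying the bound to each $f_j$ and summing yields $\sum_{i,j}\mathrm{Inf}_i[f_j]\le D\,\EE\|f\|^2$, hence $\sum_i \EE\|\partial_i f\|^2 \le \frac{D}{p(1-p)}\,\EE\|f\|^2$. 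Chaining this with the Markov step gives $\sum_i \prob[B_i(Y)] \le \frac{D}{c\,p(1-p)}$, and the elementary inequality $p(1-p) \ge p/2$ (valid for $p\le 1/2$) converts this into $\frac{cp}{2}\sum_i \prob[B_i(Y)] \le D$, which is the desired conclusion.

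The only real obstacle is quoting the total-influence inequality in the correct form, because textbook statements are usually made on the uniform $\{\pm1\}^m$-cube rather than the $p$-biased $\{0,1\}^m$-cube. Both the statement and its short Fourier-analytic proof carry over with the right normalization, and in fact this is precisely how Lemma~4.3 of \cite{GJW-lowdeg} is proved; the hypothesis $p\le 1/2$ enters only to replace $p(1-p)$ by $p/2$ in the final step.
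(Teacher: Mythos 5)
The paper does not actually prove this lemma; it is quoted verbatim from \cite{GJW-lowdeg} (Lemma~4.3) without a local proof, so there is no in-paper argument to compare against. Your reconstruction is correct and is indeed the standard route taken in that reference: observe $B_i$ is $Y_i$-independent, apply Markov to $\|\partial_i f\|^2$ at the deterministic threshold $c\,\EE\|f\|^2$, convert $\EE\|\partial_i f\|^2$ to the $p$-biased influence via $\mathrm{Inf}_i[g]=p(1-p)\,\EE[(\partial_i g)^2]$, invoke the degree-$D$ total-influence bound $\sum_i\mathrm{Inf}_i[g]\le D\,\mathrm{Var}[g]\le D\,\EE[g^2]$ coordinatewise, and finish with $p(1-p)\ge p/2$ for $p\le 1/2$. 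One minor caveat you leave implicit: the Markov step divides by $\EE\|f\|^2$, so the argument tacitly assumes $f\not\equiv 0$ (otherwise every edge is vacuously ``bad'' and the claimed inequality fails for, e.g., $D=0$); this degenerate case is harmless in context but worth flagging if you want the statement to be literally unconditional.
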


\noindent The next lemma gives an inequality that can be interpreted as follows. Roughly speaking, the right-hand side is large if there are many bad edges, and the left-hand side is large if the probability of having no bad edges on the interpolation path is small. Therefore, the inequality tells us that if the total number of bad edges is small then it is likely for the interpolation path to have no bad edges.

\begin{lemma}\label{lem:q}
Consider the interpolation path $Y^{(0)},\ldots,Y^{(T)}$ and the associated function $\sigma: [T] \to [m]$ from Definition~\ref{def:interp}.
Let $q(y)$ denote the probability that no edge of the interpolation path is bad, conditioned on the starting point $Y^{(0)} = y$. Then
\begin{equation}\label{eq:potential}
-\EE_{Y \sim G(n,p)} \log q(Y) \le H(p) \sum_{t=1}^T \prob_{Y \sim G(n,p)}[B_{\sigma(t)}(Y)]
\end{equation}
where $H$ is the binary entropy function $H(p) = -p \log p - (1-p) \log(1-p)$.
\end{lemma}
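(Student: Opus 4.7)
The plan is induction on $T$. The base case $T=0$ is trivial: $q(y)=1$, so both sides of the inequality vanish. For the inductive step, introduce
\[ q'(y^{(1)}) := \prob[\text{no bad edges in steps } 2,\ldots,T \mid Y^{(1)}=y^{(1)}], \]
which is the analog of $q$ for the length-$(T-1)$ interpolation path that resamples coordinates $\sigma(2),\ldots,\sigma(T)$. By the Markov property of the interpolation,
\[ q(y) = \EE_{Y^{(1)}\mid Y^{(0)}=y}\bigl[\mathbf{1}[\text{edge 1 not bad}]\, q'(Y^{(1)})\bigr]. \]
I then split on the value of $B_{\sigma(1)}(y)$, writing $y^b$ for the string obtained from $y$ by setting coordinate $\sigma(1)$ to $b\in\{0,1\}$.

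In Case 1 ($B_{\sigma(1)}(y)=0$) the step-1 edge is never bad regardless of the resample, so $q(y)=(1-p)\,q'(y^0)+p\,q'(y^1)$, and convexity of $-\log$ (Jensen) yields
\[ -\log q(y) \le (1-p)(-\log q'(y^0))+p(-\log q'(y^1)). \]
In Case 2 ($B_{\sigma(1)}(y)=1$) the step-1 edge is bad exactly when the resample flips coordinate $\sigma(1)$; on the good event no flip occurs, which pins $Y^{(1)}=y$ deterministically. Hence $q(y)=\prob[\text{no flip}\mid y]\cdot q'(y)$, and
\[ -\log q(y) = \mathbf{1}[y_{\sigma(1)}=0]\log\tfrac{1}{1-p} + \mathbf{1}[y_{\sigma(1)}=1]\log\tfrac{1}{p} - \log q'(y). \]

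The key structural fact is that $B_{\sigma(1)}(y)$ depends on $y$ only through $y_{-\sigma(1)}$ (both endpoints of the flipped hypercube edge are functions of $y_{-\sigma(1)}$), so under $Y\sim G(n,p)$ the indicator $\mathbf{1}[B_{\sigma(1)}(Y)=1]$ is independent of $Y_{\sigma(1)}\sim\mathrm{Bernoulli}(p)$. Integrating the Case 2 bound against this marginal turns the $Y_{\sigma(1)}$-dependent piece into exactly $H(p)\cdot\prob_Y[B_{\sigma(1)}(Y)=1]$. In Case 1, the Jensen bound averaged against the $\mathrm{Bernoulli}(p)$ law of $Y_{\sigma(1)}$ collapses to $\EE_Y[-\log q'(Y)\,\mathbf{1}[B_{\sigma(1)}(Y)=0]]$, since $y^0,y^1$ are precisely the two realizations of $y$ appearing with weights $1-p$ and $p$. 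Summing the two cases gives
\[ -\EE_Y \log q(Y) \le H(p)\,\prob_Y[B_{\sigma(1)}(Y)] + \EE_Y[-\log q'(Y)], \]
and the inductive hypothesis applied to $q'$ (path length $T-1$ with coordinate sequence $\sigma(2),\ldots,\sigma(T)$) completes the proof.

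The main subtlety is that a naive chain-rule decomposition $-\log q(y) = \sum_t -\log \prob[\text{edge } t \text{ good}\mid \text{prior edges good}]$ introduces an awkward biased conditional distribution on $Y^{(t-1)}$ that is hard to control. The induction sidesteps this by peeling off one step at a time: Case 1 cleanly invokes Jensen because $q(y)$ is a genuine $\mathrm{Bernoulli}(p)$ average of $q'(y^0)$ and $q'(y^1)$, while Case 2 cleanly invokes the exact pinning $Y^{(1)}=y$ on the good event, leaving only a deterministic log-probability cost. The binary entropy $H(p)$ emerges naturally as the expected $-\log$-cost of ``not flipping'' a $\mathrm{Bernoulli}(p)$ resample, averaged over the marginal of the coordinate being resampled.
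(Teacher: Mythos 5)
Your proof is correct and follows essentially the same route as the paper's: induction on $T$, peeling off the first resampling step, using convexity of $-\log$ when the edge $(y^0,y^1)$ is not bad and the exact entropy identity $H(p)$ when it is, together with the observation that $B_{\sigma(1)}(Y)$ is independent of $Y_{\sigma(1)}$. The paper merely packages the two cases into a single function $\varphi(y_{-j})$ before taking expectations, but the decomposition and the key steps are identical.
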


\begin{remark}
The proof of Lemma~\ref{lem:q} does not make use of the specific notion of $c$-bad from Definition~\ref{def:bad}. The result still holds if any arbitrary subset of the hypercube edges are designated ``bad'' (so long as $q(y)$ and $B_i(y)$ both use the same notion of ``bad'').
\end{remark}

\begin{remark}
Lemma~\ref{lem:q} holds not just for the specific choice of $\sigma$ from Definition~\ref{def:interp} but for any sequence $\sigma: [T] \to [m]$ of coordinates to resample. (In fact, this level of generality will be important for the inductive argument in the proof.)
\end{remark}

\begin{proof}
Proceed by induction on $T$. The base case $T = 0$ is immediate. For the case $T \ge 1$, define $\tilde q(y)$ to be the probability that the sub-walk $Y^{(1)},\ldots,Y^{(T)}$ never traverses a bad edge, conditioned on the starting point $Y^{(1)} = y$. Write $y_{-i}$ for the all-but-$i$th coordinates of $y$, and write $y_{-i}[b] \in \{0,1\}^m$ to denote the vector obtained from $y_{-i}$ by setting coordinate $i$ to the value $b \in \{0,1\}$. Note that the event $B_i(y)$ does not depend on $y_i$, so we can write $B_i(y_{-i}) := B_i(y)$. Let $j = \sigma(1)$ be the coordinate resampled in the first step. For any fixed value of $y_{-j}$, we will consider
\[ \varphi(y_{-j}) := -(1-p) \log q(y_{-j}[0]) - p \log q(y_{-j}[1]), \]
which can be thought of as the contribution from  $y_{-j}$ to the left-hand side of~\eqref{eq:potential}. If the event $B_j(y_{-j})$ holds then
\begin{align*}
\varphi(y_{-j}) &= -(1-p) \log[(1-p) \tilde q(y_{-j}[0])] - p \log[p \,\tilde q(y_{-j}[1])] \\
&= H(p) -(1-p) \log \tilde q(y_{-j}[0]) - p \log \tilde q(y_{-j}[1]),
\end{align*}
and if the complement event $\overline{B_j(y_{-j})}$ holds then
\begin{align*}
\varphi(y_{-j}) &= -\log[(1-p) \tilde q(y_{-j}[0]) + p \,\tilde q(y_{-j}[1])] \\
&\le -(1-p) \log \tilde q(y_{-j}[0]) - p \log \tilde q(y_{-j}[1])
\end{align*}
where we have used convexity of $x \mapsto -\log x$. Therefore in general we have
\[ \varphi(y_{-j}) \le H(p) \,\One_{B_j(y_{-j})} -(1-p) \log \tilde q(y_{-j}[0]) - p \log \tilde q(y_{-j}[1]). \]
Now, with $Y \sim G(n,p)$, we can write
\begin{align*}
- \EE \log q(Y) &= \EE \varphi(Y_{-j}) \\
&\le \EE[H(p) \,\One_{B_j(Y_{-j})} -(1-p) \log \tilde q(Y_{-j}[0]) - p \log \tilde q(Y_{-j}[1])] \\
&= H(p) \prob[B_j(Y)] - \EE \log \tilde q(Y).
\end{align*}
By the inductive hypothesis,
\[ -\EE \log \tilde q(Y) \le H(p) \sum_{t=2}^T \prob[B_{\sigma(t)}(Y)], \]
so this completes the proof.
\end{proof}

\begin{proof}[Proof of Proposition~\ref{prop:stable}]
We will combine Lemmas~\ref{lem:total-inf} and~\ref{lem:q}. First note that since $p \le 1/2$ we have $-p \log p \ge -(1-p) \log(1-p)$ and so
\begin{equation}\label{eq:H}
H(p) \le -2p \log p.
\end{equation}
Define $q(y)$ as in Lemma~\ref{lem:q}. The probability that no edge of the interpolation path is $c$-bad is $\EE q(Y)$ where $Y \sim G(n,p)$. We have
\begin{align*}
-\log \EE q(Y) &\le -\EE \log q(Y) & &\text{by Jensen's inequality} \\
&\le H(p) \sum_{t=1}^T \prob[B_{\sigma(t)}(Y)] & &\text{by Lemma~\ref{lem:q}} \\
&= H(p) \cdot L \sum_{i=1}^m \prob[B_i(Y)] & &\text{by Definition~\ref{def:interp}} \\
&\le H(p) \cdot L \cdot \frac{2D}{cp} & &\text{by Lemma~\ref{lem:total-inf}} \\
&\le -4c^{-1} LD \log p & &\text{by~\eqref{eq:H}}
\end{align*}
which can be rearranged to yield the result.
\end{proof}

\subsection{Putting it Together}

As in~\cite{GJW-lowdeg}, we start by observing that a random polynomial can be converted to a deterministic polynomial that works almost as well.

\begin{lemma}\label{lem:rand-det}
Suppose $f$ is a random degree-$D$ polynomial that $(k,\delta,\gamma,\eta)$-optimizes the independent set problem in $G(n,d/n)$. Then for any $c > 2$ there exists a deterministic degree-$D$ polynomial that $(k,c\delta,c\gamma,\eta)$-optimizes the independent set problem in $G(n,d/n)$.
\end{lemma}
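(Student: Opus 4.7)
The plan is a straightforward two-application of Markov's inequality over the seed randomness $\omega$, combined with a union bound. For each realization of $\omega$ the map $f(\cdot,\omega): \RR^m \to \RR^n$ is a deterministic polynomial of degree at most $D$, so the goal reduces to exhibiting a single ``seed'' $\omega^*$ for which both defining conditions of $(k,c\delta,c\gamma,\eta)$-optimization are met.

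First I would handle the norm bound. By Fubini,
\[
\EE_\omega\!\left[\EE_Y\|f(Y,\omega)\|^2\right] \;=\; \EE_{Y,\omega}\|f(Y,\omega)\|^2 \;\le\; \gamma k,
\]
so Markov's inequality gives
\[
\prob_\omega\!\left[\EE_Y\|f(Y,\omega)\|^2 > c\gamma k\right] \;\le\; \frac{1}{c}.
\]
Next, define $p(\omega) := \prob_Y[|V_f^\eta(Y,\omega)| < k]$. By hypothesis $\EE_\omega p(\omega) = \prob_{Y,\omega}[|V_f^\eta(Y,\omega)| < k] \le \delta$, so another application of Markov yields
\[
\prob_\omega\!\left[p(\omega) > c\delta\right] \;\le\; \frac{1}{c}.
\]

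Now I apply the union bound: the probability (over $\omega$) that either bad event occurs is at most $2/c$, which is strictly less than $1$ because $c > 2$. Therefore there exists some $\omega^* \in \Omega$ satisfying simultaneously
\[
\EE_Y\|f(Y,\omega^*)\|^2 \le c\gamma k \qquad \text{and} \qquad \prob_Y[|V_f^\eta(Y,\omega^*)| \ge k] \ge 1 - c\delta.
\]
Setting $\tilde f(Y) := f(Y,\omega^*)$ produces a deterministic polynomial of degree at most $D$ (the degree bound holds seed-by-seed by Definition of a random polynomial) which $(k,c\delta,c\gamma,\eta)$-optimizes the independent set problem in $G(n,d/n)$.

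There is no real obstacle here; the only thing to note is why the constant $c > 2$ appears, namely that a union bound over two events of probability $1/c$ is informative precisely when $2/c < 1$. This also explains why one cannot push the argument to the regime $c \le 2$ by this elementary method.
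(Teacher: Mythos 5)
Your proof is correct and is essentially identical to the paper's: both apply Markov's inequality over the seed $\omega$ to each of the two defining conditions and use $c>2$ to guarantee, via a union bound, a seed $\omega^*$ satisfying both simultaneously. Nothing is missing.
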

\begin{proof}
By definition, we have $\EE_{Y,\omega} [\|f(Y,\omega)\|^2] \le \gamma k$ and $\prob_{Y,\omega}[|V_f^\eta(Y,\omega)| < k] \le \delta$. By Markov's inequality,
\[ \prob_\omega\left[\EE_Y\left[\|f(Y,\omega)\|^2\right] \ge c\gamma k\right] \le \frac{1}{c} < \frac{1}{2} \qquad \text{and} \qquad \prob_\omega\left[\prob_Y\left[|V_f^\eta(Y,\omega)| < k\right] \ge c\delta\right] \le \frac{1}{c} < \frac{1}{2} \]
and so there exists a seed $\omega^* \in \Omega$ for which the resulting deterministic polynomial $f(\cdot) = f(\cdot,\omega^*)$ satisfies
\[ \EE_Y \left[\|f(Y)\|^2\right] \le c\gamma k \qquad \text{and} \qquad \prob_Y\left[|V_f^\eta(Y)| < k\right] \le c\delta \]
as desired.
\end{proof}

\noindent We now prove our main impossibility result.

\begin{proof}[Proof of Theorem~\ref{thm:main-lower}]
For any given $\eps > 0$, set $K = \lceil 1 + 5/\eps^2 \rceil$, $T = (K-1)m$, and $\eta = \frac{\eps \log d}{16 d}$. The constant $d^* = d^*(\eps) \ge 1$ will be chosen so that $d$ is sufficiently large to apply Lemma~\ref{lem:max-size}, Proposition~\ref{prop:forbid}, and Lemma~\ref{lem:spread} in the sequel. Let $\Phi = \frac{\log d}{d} n$.

Assume on the contrary that the random polynomial that we wish to rule out, exists. By Lemma~\ref{lem:rand-det}, there exists a deterministic degree-$D$ polynomial $f$ that satisfies
\begin{equation}\label{eq:f-det-prop}
\EE_Y \left[\|f(Y)\|^2\right] \le 3\gamma(1+\eps)\Phi \qquad \text{and} \qquad \prob_Y\left[|V_f^\eta(Y)| < (1+\eps)\Phi\right] \le 3\delta.
\end{equation}

Sample the interpolation path $Y^{(0)},\ldots,Y^{(T)}$ as in Definition~\ref{def:interp}, and let $U_t = V_f^\eta(Y^{(t)})$ be the resulting independent sets. Consider the following process to construct a sequence of sets $S_1, \ldots, S_K \subseteq [n]$. Let $S_1 = U_0$. Then for $k = 2,3,\ldots,K$, let $S_k = U_{t_k}$ where $t_k \in [T]$ is the minimum $t$ for which $|U_t \setminus (\cup_{\ell < k}\, S_\ell)| \ge \frac{\eps}{4} \Phi$; if no such $t$ exists then the process fails. We will show that with positive probability (over the interpolation path), the following events all occur simultaneously:
\begin{enumerate}
    \item[(i)] $|U_t| \ge (1+\eps)\Phi$ for all $0 \le t \le T$, and the process $S_1,\ldots,S_K$ succeeds,
    \item[(ii)] no edge on the interpolation path is $c$-bad for $f$, where $c = \frac{\eps}{96 \gamma (1+\eps)}$,
    \item[(iii)] the conclusion of Proposition~\ref{prop:forbid} holds (i.e., no instance of the forbidden structure exists).
\end{enumerate}

\noindent We will first show that events (i)-(iii) occur simultaneously with positive probability, and then we will show that this yields a contradiction. By Proposition~\ref{prop:stable}, event (ii) occurs with probability at least $(d/n)^{4(K-1)D/c}$. By Proposition~\ref{prop:forbid}, event (iii) occurs with probability $1-\exp(-\Omega(n))$. It remains to consider event (i).

For each fixed $t$ we have that $Y^{(t)}$ is distributed as $G(n,d/n)$, so by combining Lemma~\ref{lem:max-size} with the second property of $f$ from~\eqref{eq:f-det-prop}, we have $(1+\eps)\Phi \le |U_t| \le (2+\eps)\Phi$ with probability at least $1 - 3\delta - \exp(-\Omega(n))$; we will take a union bound over $t$. Now suppose that for some $0 \le T' \le T-m$, $Y^{(0)},\ldots,Y^{(T')}$ have been sampled so far, and $0 = t_1 < t_2 < \cdots < t_{K'}$ are the indices of the sets $S_k = U_{t_k}$ selected so far ($t_{K'} \le T'$). Note that $Y^{(T'+m)}$ is independent from $\{Y^{(t)}\}_{t \le T '}$ and so, provided $|S_k| \le (2+\eps)\Phi$ for $1 \le k \le K'$, Lemma~\ref{lem:spread} (with $S = \cup_{k \le K'}\, S_k$ and $a = (2+\eps)K'$) implies $|U_{T'+m} \cap (\cup_{k \le K'}\, S_k)| \le \eps \Phi$ with probability $1-\exp(-\Omega(n))$. Provided $|U_{T'+m}| \ge (1+\eps)\Phi$, this implies $|U_{T'+m} \setminus (\cup_{k \le K'}\, S_k)| \ge \Phi \ge \frac{\eps}{4}\Phi$ and so $t_{K'+1} \le T'+m$; thus by induction, $t_k \le (k-1)m$ for all $k \in [K]$ and so the process $\{S_k\}$ succeeds by timestep $T = (K-1)m$. We therefore conclude that event (i) holds with probability at least $1 - 3\delta(T+1) - \exp(-\Omega(n))$.

Using $3\delta(T+1) = 3\delta[(K-1)m+1] \le 3\delta Km$, we now have that events (i)-(iii) occur simultaneously with positive probability, provided
\begin{equation}\label{eq:pos-prob}
(d/n)^{4(K-1)D/c} > 3\delta Km + \exp(-\Omega(n)).
\end{equation}
For sufficiently large $n$, the term $\exp(-\Omega(n))$ is at most $\exp(-Cn)$ for some constant $C = C(\eps,d) > 0$. Also recall $m = \binom{n}{2} < \frac{n^2}{2}$. Thus, to satisfy~\eqref{eq:pos-prob}, it is sufficient to have
\begin{equation}\label{eq:2-cond}
(d/n)^{4(K-1)D/c} \ge 3\delta K n^2 \qquad \text{and} \qquad (d/n)^{4(K-1)D/c} \ge 2 \exp(-Cn).
\end{equation}
For $d \ge 1$, the second condition in~\eqref{eq:2-cond} is implied by $D \le (Cn - \log 2) \frac{c}{4(K-1)\log n}$. For sufficiently large $n$, and using $c = \frac{\eps}{96\gamma(1+\eps)}$, this is implied by $D \le \frac{C_1 n}{\gamma \log n}$, where $C_1 = C_1(\eps,d) > 0$ is a constant. For $d \ge 1$, the first condition in~\eqref{eq:2-cond} is implied by $\delta \le \exp[-\frac{4(K-1)D}{c} \log n - 2 \log n - \log(3K)]$. Since $\gamma \ge 1$ and $D \ge 1$, for sufficiently large $n$ this is implied by $\delta \le \exp(-C_2 \gamma D \log n)$ for another constant $C_2 = C_2(\eps,d) > 0$.

To complete the proof, it remains to show that if events (i)-(iii) occur simultaneously, this results in a contradiction. The idea is to use the stability property from (ii) to show that the sets $S_1,\ldots,S_K$ from (i) are an instance of the forbidden structure that is disallowed by (iii).

We will first show $|U_t \symd U_{t-1}| \le \frac{\eps}{4} \Phi$ for all $1 \le t \le T$, where $\symd$ denotes symmetric difference. From (i) we know that the failure event in $V^\eta_f$ (the second case of~\eqref{eq:V}) does not occur on any of the inputs $Y^{(t)}$. Therefore, the definition of $V_f^\eta$ (Definition~\ref{def:V}) implies that there are at least $|U_t \symd U_{t-1}| - 2 \eta n$ coordinates $i \in [n]$ for which $|f_i(Y^{(t)}) - f_i(Y^{(t-1)})| \ge 1/2$. To see this, note that $\One_{i \in U_t}$ can only differ from $\One_{i \in U_{t-1}}$ if either
\begin{itemize}
\item $i$ lies in the set $(A \setminus \tilde{A}) \cup B$ (see Definition~\ref{def:V}) for either $V_f^\eta(Y^{(t)})$ or $V_f^\eta(Y^{(t-1)})$, or
\item among the values $f_i(Y^{(t)})$ and $f_i(Y^{(t-1)})$, one is $\ge 1$ and the other is $\le 1/2$.
\end{itemize}
This means
\[ \frac{1}{4}(|U_t \symd U_{t-1}| - 2 \eta n) \le \|f(Y^{(t)}) - f(Y^{(t-1)})\|^2 \le c \EE_{Y \sim G(n,d/n)} \left[\|f(Y)\|^2\right] \le 3c \gamma (1+\eps)\Phi \]
where we have used event (ii) along with the definition of $c$-bad (Definition~\ref{def:bad}) and the first property of $f$ from~\eqref{eq:f-det-prop}. Rearranging this yields
\[ |U_t \symd U_{t-1}| \le 12c \gamma (1+\eps)\Phi + 2 \eta n = \frac{\eps}{4}\Phi \]
as desired, where we have used $c = \frac{\eps}{96 \gamma (1+\eps)}$ and $\eta = \frac{\eps \log d}{16 d}$.

Recall that $S_k$ is the first $U_t$ for which $|U_t \setminus (\cup_{\ell < k}\, S_\ell)| \ge \frac{\eps}{4}\Phi$. Using the fact $|U_t \symd U_{t-1}| \le \frac{\eps}{4}\Phi$ from above, this means $|S_k \setminus (\cup_{\ell < k}\, S_\ell)| \le \frac{\eps}{2}\Phi$. Combining this with event (i) and the fact that $S_k$ is an independent set in $Y^{(t_k)}$, we have that $S_1,\ldots,S_K$ satisfies the properties of the forbidden structure from event (iii). This yields a contradiction and completes the proof.
\end{proof}

\section{Proof of Achievability}\label{sec:upper}

In this section we prove our main achievability result (Theorem~\ref{thm:main-upper}) which shows that low-degree algorithms can find independent sets of size $(1-\eps) \frac{\log d}{d} n$. We begin by defining some terminology pertaining to local algorithms on graphs. Throughout this section we will consider graphs $G = (V,E)$ with possibly-infinite vertex set $V$, but which are \emph{locally finite}, i.e., each vertex has a finite number of neighbors. We will consider functions that take as input $(G,v)$ where $G = (V,E)$ is a graph and $v \in V$ is a designated ``root'' vertex; let $\Lambda$ denote the set of such $(G,v)$ pairs. We will also consider functions that take as input $(G,v,X)$ where $G$ and $v$ are as before and $X: V \to [0,1]$ is a labelling of the vertices; let $\tilde{\Lambda}$ denote the set of such $(G,v,X)$ pairs.

For a graph $G = (V,E)$ and a vertex $v \in V$, the \emph{$r$-neighborhood} of $v$, denoted $N_r(G,v)$, is the rooted graph with root $v$ that contains all vertices reachable from $v$ by a path of length $\le r$, along with all edges on such paths. We will use $|N_r(G,v)|$ to denote the number of edges in the $r$-neighborhood. Two rooted graphs are said to be isomorphic if there is a root-preserving graph isomorphism between them. A function $g$ with domain $\Lambda$ is said to be $r$-local if $g(G,v)$ depends only on the isomorphism class of $N_r(G,v)$. (Informally, $g$ has access to the ``shape'' of the $r$-neighborhood but not the identity of the specific vertices.)

In the presence of vertex labels $X: V \to [0,1]$, we generalize the above notions as follows. The \emph{labeled $r$-neighborhood} of $v$ in $G$, denoted $\tilde{N}_r(G,v,X)$, is the $r$-neighborhood along with the vertex labels given by $X$ (restricted to the $r$-neighborhood). Two rooted labeled graphs are said to be isomorphic if there is a root-preserving and label-preserving graph isomorphism between them. A function $h$ with domain $\tilde\Lambda$ is said to be $r$-local if $h(G,v,X)$ depends only on the isomorphism class of $\tilde{N}_r(G,v,X)$.

The \emph{Poisson Galton--Watson tree} with parameter $d > 0$, denoted $\PGW(d)$, is the distribution over rooted (possibly-infinite) trees $(T,o)$ generated as follows:
\begin{itemize}
    \item Start with a root vertex $o$ at level $0$.
    \item For $\ell = 0,1,2,\ldots$, each vertex at level $\ell$ independently spawns $\Pois(d)$ child vertices at level $\ell+1$.
    \item Every vertex (except the root) is connected to its parent by an edge.
\end{itemize}

\noindent It is well-known that the distribution of the $r$-neighborhood of any fixed vertex in $G(n,d/n)$ converges to the $r$-neighborhood of the root in $\PGW(d)$ as $n \to \infty$ with $r$ held fixed (as discussed in e.g.~\cite{half-opt}); see Lemma~\ref{lem:graph-tree} below for one precise sense in which this convergence holds.

An \emph{$r$-local algorithm} for the maximum independent set problem is an $r$-local function $h: \tilde\Lambda \to \{0,1\}$ with the property that $\{v \in V \,:\, h(G,v,X) = 1\}$ is an independent set for any graph $G = (V,E)$ with any vertex labels $X$. A line of prior work~\cite{LW-indep,HLS,GS-local,half-opt} has considered the problem of choosing $h$ to maximize the expected size of the independent set when $G \sim G(n,d/n)$ and $X$ is i.i.d.\ $\Unif([0,1])$. Due to the convergence of local neighborhoods to $\PGW(d)$, this task is equivalent (up to sub-leading terms in $n$) 
to maximizing the probability that $h(T,o,X) = 1$ when $(T,o) \sim \PGW(d)$ and $X$ is again i.i.d.\ $\Unif([0,1])$.

The following result of~\cite{half-opt} shows that local algorithms can produce large independent sets in $\PGW(d)$. As discussed in Section~4 of~\cite{half-opt}, this implies that local algorithms can produce independent sets of expected size $(1-\eps) \frac{\log d}{d} n$ in $G(n,d/n)$.

\begin{theorem}[\cite{half-opt}~Theorem~4.1]
\label{thm:local-alg}
For any $\eps > 0$ and any sufficiently large $d$ (depending on $\eps$), there exists $r = r(\eps,d)$ and an $r$-local function $h: \tilde\Lambda \to \{0,1\}$ satisfying the following. If $(T,o) \sim \PGW(d)$ and vertex labels $\{X_v\}_{v \in V(T)}$ are drawn i.i.d.\ from the uniform distribution on $[0,1]$, then
\begin{itemize}
\item the vertex set $\{v \in V(T) \,:\, h(T,o,X) = 1\}$ is an independent set in $T$ with probability 1, and
\item $\EE[h(T,o,X)] \ge (1-\eps) \frac{\log d}{d}$.
\end{itemize}
\end{theorem}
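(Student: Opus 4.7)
The plan is to construct an $r$-local rule whose output on $\PGW(d)$ is an independent set with density $(1-\eps)\frac{\log d}{d}$, and then to transfer this to $G(n,d/n)$ via local weak convergence.

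First I would define the algorithm as a multi-round local greedy. Associate to each vertex an \emph{active} status, initially true. For $K = K(\eps,d)$ rounds with thresholds $\tau_1 < \cdots < \tau_K$ in $[0,1]$ to be chosen later, declare an active vertex $v$ \emph{included in round $k$} if $X_v \leq \tau_k$ and no currently-active neighbor $u$ of $v$ satisfies $X_u < X_v$; after each round, all included vertices together with their neighbors are deactivated. Let $h(G,v,X)$ be the final inclusion bit. By construction the output is an independent set. One round of the active-status update increases the dependence region by two graph steps, so by induction $h(G,v,X)$ is determined by $\tilde N_{2K}(G,v,X)$ and $h$ is $(2K)$-local.

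Second, I would analyze $\EE[h(T,o,X)]$ for $(T,o)\sim\PGW(d)$ via the branching structure. Let $a_k(t)$ denote the probability that a given vertex is still active after round $k$ conditional on its label equalling $t$, computed recursively by treating child subtrees as independent copies of the process. Because each vertex has $\Pois(d)$ children that behave independently, one obtains a recursion of the form $a_k(t) = \exp(-d\,\beta_k(t))$, where $\beta_k(t)$ is an integral over child labels encoding the event that a child has been included by round $k$. Integrating $\int_0^1 \Pr[\text{root included}\mid X_o = t]\,dt$ and taking $K \to \infty$ with suitably fine thresholds, the recursion becomes a first-order ODE in $t$ that recovers the Karp greedy asymptotic of $(1-o_d(1))\frac{\log d}{d}$ at the optimal choice of thresholds. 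A careful choice of $K$ and the $\tau_k$ then yields $\EE[h(T,o,X)] \geq (1-\eps)\frac{\log d}{d}$.

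Finally, passing from $\PGW(d)$ to $G(n,d/n)$ is routine: since $h$ is $(2K)$-local with $2K$ a constant in $n$, local weak convergence (cf.\ Lemma~\ref{lem:graph-tree}) gives $\EE_{G,X}[h(G,v,X)] = (1-o(1))\,\EE_{T,X}[h(T,o,X)]$ for each fixed $v \in [n]$, and summing inclusion indicators across $v \in [n]$ produces an independent set of expected size $(1-\eps)\frac{\log d}{d} n - o(n)$. The main obstacle is the density analysis in the second step: one must solve the multi-round branching recursion tightly enough to recover the sharp $\log d / d$ asymptotic, which I would handle by rescaling labels by a factor of $d$, tracking $\log a_k$, and passing to a continuous interpolation that reduces to a tractable ODE whose integral gives the desired $(1-\eps)$ factor for large $d$.
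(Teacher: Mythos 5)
This theorem is not proved in the paper at all: it is imported verbatim from Rahman--Vir\'ag \cite{half-opt}, whose achievability construction is the Lauer--Wormald-style multi-round parallel greedy adapted to Galton--Watson trees. Your construction is essentially that same algorithm: over $K$ rounds, a still-active vertex joins the set when its label is below a threshold and locally minimal among active neighbours, after which closed neighbourhoods of included vertices are deactivated. The independence of the output and the $(2K)$-locality argument are correct, and in the $K\to\infty$, fine-mesh limit the process does converge to the sequential greedy (parking) process, whose density on $\PGW(d)$ is indeed $(1-o_d(1))\frac{\log d}{d}$. So the architecture matches the cited proof.

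Two substantive gaps remain, both in the density analysis, which is where all the work is. First, the recursion $a_k(t)=\exp(-d\,\beta_k(t))$ treats a vertex's active status as a function of its children alone, but on the tree a non-root vertex is also deactivated by its \emph{parent}, and whether it is a local minimum depends on the parent's label; as written, $a_k$ does not satisfy a closed recursion. The standard fix is a directed (message-passing) formulation: for each edge, define the probability that the child would be included or active in the process run on its subtree with the parent deleted (or conditioned on the parent's label), show that \emph{this} quantity satisfies a closed recursion by the independence of the $\Pois(d)$ child subtrees, and only at the end assemble the root's inclusion probability from the child messages. Second, the entire quantitative content of the theorem --- that the limit of this recursion gives density $(1-o_d(1))\frac{\log d}{d}$ rather than, say, $\Theta(1/d)$, and that truncating at finite $K$ and finite threshold mesh loses only $\eps$ --- is deferred to ``a tractable ODE whose integral gives the desired factor.'' Solving that ODE and controlling the truncation error \emph{is} the theorem; until that is carried out the proposal is a correct plan rather than a proof. (The closing paragraph about transferring to $G(n,d/n)$ is not needed: the statement is purely about $\PGW(d)$, and the transfer is what the surrounding paper itself performs.)
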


\begin{remark}\label{rem:local-to-ld}
Our proof of Theorem~\ref{thm:main-upper} will show how to approximate the local algorithm from Theorem~\ref{thm:local-alg} by a low-degree algorithm. We will not use any specifics of the local algorithm, and so our proof actually shows how to approximate \emph{any} local algorithm by a low-degree algorithm. More precisely: for any fixed $\eps > 0$, $\eta > 0$, and $d \ge 1$, if we are given an $r$-local algorithm $h$ for independent sets with $\EE[h(T,o,X)] \ge \alpha$, then for any $n \ge n^*(\eps,\eta,d,r,\alpha)$ we can produce a deterministic degree-$D$ polynomial that $(k,\delta,\gamma,\eta)$-optimizes the independent set problem in $G(n,d/n)$ with parameters $k = (1-\eps)\alpha n$ and $\delta = \exp(-C n^{1/3})$ where $D > 0, \gamma \ge 1, C > 0$ are constants depending on $\eps,\eta,d,r,\alpha$.
\end{remark}

\noindent The next result, which is a special case of Lemma~12.4 of~\cite{very-sparse}, quantifies the convergence of local neighborhoods of $G(n,d/n)$ to $\PGW(d)$.

\begin{lemma}[see \cite{very-sparse} Lemma~12.4]
\label{lem:graph-tree}
Let $G \sim G(n,d/n)$, and let $(T,o) \sim \PGW(d)$. Let $g: \Lambda \to [-1,1]$ be an $r$-local function. For all sufficiently large $n$ (depending on $d,r$) and for any $v \in [n]$,
\[ \left|\EE[g(G,v)] - \EE[g(T,o)]\right| \le c n^{-1/4} \log n \]
where $c > 0$ is a universal constant.
\end{lemma}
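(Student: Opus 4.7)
The plan is a standard BFS coupling argument. Since $g$ is $r$-local and bounded in $[-1,1]$, it suffices to exhibit a coupling of $N_r(G,v)$ and $N_r(T,o)$ under which the two rooted graphs are isomorphic with probability $1-O(n^{-1/4}\log n)$; the remaining contribution to $|\EE[g(G,v)]-\EE[g(T,o)]|$ is then at most twice the coupling-failure probability.

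First I would explore the two $r$-neighborhoods simultaneously in BFS order, one vertex at a time. When visiting a vertex whose exploration has so far revealed $s$ vertices in total, the graph side produces a fresh $\mathrm{Bin}(n-s,d/n)$ batch of new neighbors among the unexplored vertices, while the tree side produces $\mathrm{Pois}(d)$ children. I would couple these offspring counts via the standard bound $d_{\mathrm{TV}}(\mathrm{Bin}(N,d/n),\mathrm{Pois}(d))=O((d^2+ds)/n)$ with $N=n-s$, and couple the identities of new graph-children (sampled uniformly without replacement from the unrevealed vertices) with fresh labels on the tree. The coupling fails either when the two offspring counts disagree, or when some new graph-child coincides with a previously revealed vertex (a ``cycle'', which has no tree analogue); the latter event has probability $O(s/n)$ per exposed edge.

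Next I would truncate: if at any point the running neighborhood size exceeds $M := C\log n$ for a large enough constant $C = C(d,r)$, declare coupling failure. The standard subexponential tail bound for the total size of the first $r$ levels of a $\PGW(d)$ tree gives $\prob[|N_r(T,o)|>M] = o(n^{-1})$ once $n$ is large in terms of $d$ and $r$. Within the truncated event, summing the per-step errors $O((d^2+ds)/n) + O(s/n)$ over at most $M$ exploration steps yields a coupling-failure probability of $O(M^2/n) = O(\log^2 n / n)$, comfortably smaller than the claimed bound.

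The main obstacle is bookkeeping: one must carefully define the joint distribution so that each side's marginal is preserved, while simultaneously tracking the two failure modes (incorrect offspring count and coincident vertex labels) and the truncation event. The slack between the $O(\log^2 n / n)$ coming out of the analysis above and the stated $n^{-1/4}\log n$ reflects the generality of Lemma~12.4 of~\cite{very-sparse}, which is proved for a broader class of functions and parameter regimes (allowing $d$ and $r$ to depend on $n$); in our present setting where $d$ and $r$ are fixed, the stated rate is more than sufficient and we simply quote it.
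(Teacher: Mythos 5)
Your proof is correct, but it is worth noting that the paper does not actually prove this lemma at all: it is quoted verbatim as a special case of Lemma~12.4 of \cite{very-sparse}, and the $n^{-1/4}\log n$ rate is simply inherited from that reference. What you have written is the standard self-contained argument --- a breadth-first exploration of $N_r(G,v)$ coupled step by step with the generations of $\PGW(d)$, with per-step error $O\bigl((d^2+ds)/n\bigr)$ from the Binomial--Poisson total-variation bound plus $O(s/n)$ from possible cycle-creating edges back into the revealed set, truncated at neighborhood size $M=C\log n$ using the exponential tail of the total progeny of the first $r$ generations. The bookkeeping you flag as the main obstacle is genuinely the only content, and your accounting is right: conditional on the history, the graph side exposes a $\mathrm{Bin}(n-s,d/n)$ batch of fresh neighbors plus at most $s$ independent $\mathrm{Bernoulli}(d/n)$ back-edges, and summing over at most $M$ exploration steps gives failure probability $O(M^2/n)=O(\log^2 n/n)$, which together with the $o(n^{-1})$ truncation cost is far below the claimed $cn^{-1/4}\log n$. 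So your route buys a self-contained proof and a quantitatively stronger bound in the fixed-$(d,r)$ regime, at the cost of redoing an argument the paper deliberately outsources; the weaker stated rate is harmless here because the lemma is only used in \eqref{eq:n_T-mean}, where any $o(n)$ error suffices.
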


\noindent The next result is a special case of (the first statement in) Proposition~12.3 of~\cite{very-sparse}.

\begin{proposition}[see~\cite{very-sparse} Proposition~12.3]
\label{prop:local-moment}
Let $G \sim G(n,d/n)$ with $d \ge 1$. Let $g: \Lambda \to [-1,1]$ be an $r$-local function. For all $p \ge 2$,
\[ \EE\left[\,\left|\sum_{v \in [n]} g(G,v) - \EE \sum_{v \in [n]} g(G,v)\right|^p\,\right] \le \left(c \sqrt{n} p^{3/2} (2d)^r \right)^p \]
where $c > 0$ is a universal constant.
\end{proposition}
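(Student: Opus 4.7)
The plan is to write $F(Y) := \sum_{v \in [n]} g(G(Y),v)$ as a function of the $m$ independent Bernoulli$(d/n)$ edge indicators $Y \in \{0,1\}^m$, and then bound its $p$-th central moment by combining an Efron--Stein / Burkholder $L^p$ inequality with tail control on $r$-neighborhood sizes in $G(n,d/n)$.

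First I would quantify the discrete sensitivity of $F$. For each edge coordinate $e \in [m]$, let $Y^{(e)}$ denote $Y$ with the $e$-th entry flipped. Because $g$ is $r$-local and takes values in $[-1,1]$, the summand $g(G(Y),v) - g(G(Y^{(e)}),v)$ vanishes unless some endpoint of $e$ lies within distance $r$ of $v$ in $G(Y) \cup G(Y^{(e)})$. Writing $N_{r,e}(Y)$ for the number of such vertices, this yields the pointwise bound $|F(Y) - F(Y^{(e)})| \le 2\, N_{r,e}(Y)$.

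Next I would invoke the Boucheron--Bousquet--Lugosi--Massart $L^p$ moment inequality for functions of independent variables (equivalently, a Burkholder square-function bound for the edge-revealing martingale) to obtain, for a universal constant $K$,
\[ \EE|F - \EE F|^p \;\le\; (Kp)^{p/2} \, \EE\!\left(\sum_{e \in [m]} (F(Y) - F(Y^{(e)}))^2\right)^{p/2} \;\le\; (4Kp)^{p/2} \, \EE\!\left(\sum_e N_{r,e}(Y)^2\right)^{p/2}. \]
This is the source of the $p^{1/2}$ factor in the target bound. The final ingredient is a moment estimate of the form $\EE[N_{r,e}^p] \le (c_0\, p\, (2d)^r)^p$. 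Since each vertex of $G(n,d/n)$ has degree Binomial$(n-1, d/n)$, dominated by $\Pois(d)$ in the convex order, the BFS exploration of the $r$-ball around an endpoint of $e$ is stochastically dominated by $r$ generations of a $\Pois(d)$ Galton--Watson branching process; standard exponential-moment estimates for its total progeny give the claimed bound. Applying Minkowski's inequality in $L^{p/2}$,
\[ \left(\EE\!\left(\sum_e N_{r,e}^2\right)^{p/2}\right)^{2/p} \;\le\; \sum_e (\EE N_{r,e}^p)^{2/p} \;\le\; m \cdot (c_0\, p\, (2d)^r)^2, \]
so $\EE(\sum_e N_{r,e}^2)^{p/2} \le (c_0^2 m p^2 (2d)^{2r})^{p/2}$. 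Feeding this back into the displayed inequality and using $m \le n^2/2$ produces the stated bound $(c\sqrt{n}\, p^{3/2}\, (2d)^r)^p$.

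The main obstacle I anticipate is the last ingredient: securing the moment estimate $\EE N_{r,e}^p \le (c_0 p (2d)^r)^p$ with the correct \emph{linear} dependence on $p$. A naive generation-by-generation union bound over the branching process loses factors of $r!$, which would spoil the final constant; the sharp control instead requires computing $\EE[\exp(\lambda |B_r|)]$ through a telescoped recursion across the $r$ levels of the dominating branching process and then optimizing in $\lambda$. This is the quantitative heart of Section~12 of~\cite{very-sparse}, and the clean $(2d)^r$ scaling in the final bound (as opposed to a weaker quantity such as $d^r r!$) is its chief consequence.
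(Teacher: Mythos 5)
First, a structural note: the paper does not prove Proposition~\ref{prop:local-moment} at all --- it is imported verbatim as a special case of Proposition~12.3 of~\cite{very-sparse} --- so there is no in-paper argument to compare yours against; I can only assess your sketch on its own terms.

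Your sketch has a genuine quantitative gap in the square-function step, and it is exactly a factor of $\sqrt{n}$. You apply the moment inequality with the \emph{raw flip} differences $F(Y)-F(Y^{(e)})$ summed over all $m=\binom{n}{2}$ coordinates. But for essentially \emph{every} potential edge $e$ --- including the $\sim n^2/2$ absent ones --- toggling $e$ perturbs the $r$-neighborhoods of $\Theta((2d)^r)$ vertices, so $N_{r,e}$ is of constant order for all $m$ coordinates and $\sum_e N_{r,e}^2 = \Theta(n^2 (2d)^{2r})$. Your own final arithmetic then gives $\sqrt{m}\le n/\sqrt{2}$, i.e.\ a bound of the form $\left(c\,n\,p^{3/2}(2d)^r\right)^p$, not $\left(c\sqrt{n}\,p^{3/2}(2d)^r\right)^p$; the sentence ``using $m\le n^2/2$ produces the stated bound'' is a slip. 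A bound with $n$ in place of $\sqrt{n}$ is useless downstream: Corollary~\ref{cor:local-conc} needs nontrivial control of deviations of size $\ll n$, whereas $|F|\le n$ trivially. The missing ingredient is the Bernoulli weighting: the correct Boucheron--Bousquet--Lugosi--Massart variance proxy is $V=\sum_e \EE'_e\bigl[(F(Y)-F(\tilde Y^e))^2\bigr]$ with the $e$-th coordinate \emph{resampled}, which equals the flip difference squared times $\prob[Y_e\neq \tilde Y_e]\le 2p(1-p)\approx 2d/n$. (Equivalently, in the edge-revealing Doob martingale one must use the \emph{conditional} square function in Burkholder's inequality, not the worst-case increments.) This weighting converts $\sum_e N_{r,e}^2 = \Theta(n^2)$ into $V=O(n\cdot d\,(2d)^{2r})$ and restores the $\sqrt{n}$.

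Two smaller points. Even after the fix you pick up a stray $\sqrt{d}$ (and similar bounded factors) relative to the stated $(2d)^r$; this is harmless but should be absorbed explicitly. More importantly, the moment estimate $\EE[N_{r,e}^p]\le (c_0\,p\,(2d)^r)^p$ with dependence \emph{linear} in $p$ and no $r!$ loss --- which you correctly identify as the quantitative heart --- is asserted rather than proved, so as written the argument is a plausible roadmap rather than a proof even once the $\sqrt{n}$ issue is repaired.
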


\noindent A simple consequence of the above moment inequality is a tail bound for local functions.

\begin{corollary}\label{cor:local-conc}
Let $G \sim G(n,d/n)$ with $d \ge 1$. Let $g: \Lambda \to [-1,1]$ be an $r$-local function. For a universal constant $c > 0$ and for all $t \ge (2e)^{3/2} c \sqrt{n} (2d)^r$,
\[ \prob\left[\,\left|\sum_{v \in [n]} g(G,v) - \EE \sum_{v \in [n]} g(G,v)\right| \ge t\right] \le \exp\left(-\frac{3t^{2/3}}{2ec^{2/3}n^{1/3}(2d)^{2r/3}}\right). \]
\end{corollary}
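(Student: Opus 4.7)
The plan is the standard ``moment method + Markov + optimize $p$'' passage from a moment bound to a tail bound, applied to the sum $X := \sum_{v \in [n]} g(G,v) - \EE \sum_{v \in [n]} g(G,v)$. By Markov's inequality applied to $|X|^p$ together with Proposition~\ref{prop:local-moment}, for any real $p \ge 2$,
\[
\prob[|X| \ge t] \;\le\; \frac{\EE[|X|^p]}{t^p} \;\le\; \left(\frac{c\sqrt{n}\,p^{3/2}\,(2d)^r}{t}\right)^{p}.
\]
(One mild point: Proposition~\ref{prop:local-moment} is stated for integer $p$, but I will just take $p$ to be the nearest integer above the real optimum; the constants absorb the rounding, which is why the proposition's constant $c$ may differ from the corollary's by a universal factor.)

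Next I would optimize the right-hand side over $p$. Writing $\alpha := t / (c\sqrt{n}\,(2d)^r)$, the bound becomes $\exp\!\big(p(-\log\alpha + \tfrac{3}{2}\log p)\big)$. Differentiating the exponent in $p$ and setting the derivative to zero gives the stationary point
\[
p^* \;=\; e^{-1}\,\alpha^{2/3} \;=\; e^{-1}\left(\frac{t}{c\sqrt{n}\,(2d)^r}\right)^{2/3}.
\]
Substituting $p = p^*$ into $p(-\log\alpha + \tfrac{3}{2}\log p)$ gives, after a short cancellation,
\[
p^*\!\left(-\log\alpha + \tfrac{3}{2}\log p^*\right) \;=\; -\tfrac{3}{2e}\,\alpha^{2/3} \;=\; -\frac{3\,t^{2/3}}{2e\,c^{2/3}\,n^{1/3}\,(2d)^{2r/3}},
\]
which is exactly the exponent claimed in the corollary.

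Finally, I would verify that the hypothesis $t \ge (2e)^{3/2} c \sqrt{n}\,(2d)^r$ is precisely what is needed to guarantee $p^* \ge 2$, so that the moment bound of Proposition~\ref{prop:local-moment} is applicable at $p = p^*$: indeed $p^* \ge 2$ iff $\alpha^{2/3} \ge 2e$ iff $\alpha \ge (2e)^{3/2}$, which is the stated lower bound on $t$. Combining the three pieces yields the corollary.

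The main obstacle, such as it is, is just bookkeeping: choosing an integer $p$ near $p^*$ (rather than $p^*$ itself) while making sure the resulting constants still match the ones in the statement, and checking that the threshold on $t$ lines up with the $p \ge 2$ requirement. There is no real conceptual difficulty; the content is entirely in Proposition~\ref{prop:local-moment}, and this corollary is the routine Chernoff-style consequence.
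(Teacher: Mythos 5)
Your proof is correct and takes essentially the same route as the paper: the paper likewise applies Markov's inequality to $|X|^p$ with the choice $p = e^{-1}\left(t/(c\sqrt{n}(2d)^r)\right)^{2/3}$, observes that the bound collapses to $\exp(-\tfrac{3}{2}p)$, and uses the hypothesis on $t$ precisely to ensure $p \ge 2$. Your worry about integrality of $p$ is moot, since Proposition~\ref{prop:local-moment} is stated for all real $p \ge 2$, so no rounding (and no loss in the constant $c$) is needed.
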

\begin{proof}
Let $c$ be the constant from Proposition~\ref{prop:local-moment}. Choosing $p = e^{-1} [c \sqrt{n} (2d)^r / t]^{-2/3} \ge 2$,
\begin{align*}
\prob\left[\,\left|\sum_{v \in [n]} g(G,v) - \EE \sum_{v \in [n]} g(G,v)\right| \ge t\right] &= \prob\left[\,\left|\sum_{v \in [n]} g(G,v) - \EE \sum_{v \in [n]} g(G,v)\right|^p \ge t^p\right] \\
&\le t^{-p} \EE\left[\,\left|\sum_{v \in [n]} g(G,v) - \EE \sum_{v \in [n]} g(G,v)\right|^p\,\right] \\
&\le t^{-p} \left(c \sqrt{n} p^{3/2} (2d)^r \right)^p \\
&= \exp\left(-\frac{3}{2}\,p\right) \\
&= \exp\left(-\frac{3t^{2/3}}{2ec^{2/3}n^{1/3}(2d)^{2r/3}}\right)
\end{align*}
as desired.
\end{proof}

\noindent We will also need the following standard multiplicative version of the Chernoff bound~\cite{UM-book}.

\begin{proposition}\label{prop:chernoff}
Suppose $Z_1,\ldots,Z_n$ are independent, taking values in $\{0,1\}$. Let $Z = \sum_i Z_i$ and $\mu = \EE[Z]$. For any $0 \le \delta \le 1$,
\[ \prob[Z \le (1-\delta)\mu] \le \exp\left(-\frac{\delta^2 \mu}{2}\right). \]
Also, for any $\delta \ge 0$,
\[ \prob[Z \ge (1+\delta)\mu] \le \exp\left(-\frac{\delta^2 \mu}{2+\delta}\right), \]
and so for $\delta \ge 1$,
\[ \prob[Z \ge (1+\delta)\mu] \le \exp\left(-\frac{\delta \mu}{3}\right). \]
\end{proposition}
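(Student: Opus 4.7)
The plan is to prove all three inequalities via the standard Chernoff/moment-generating-function method, first establishing a unified exponential-tilt bound and then converting it into the stated polynomial forms. The only nontrivial work is the elementary convexity estimates that turn the exact Legendre-transform exponent into the clean quadratic expressions in the statement.

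First I would write down the basic exponential moment bound. Let $p_i = \EE[Z_i]$, so $\mu = \sum_i p_i$. For any $\lambda \in \RR$, independence gives
\[ \EE[e^{\lambda Z}] = \prod_{i=1}^n \EE[e^{\lambda Z_i}] = \prod_{i=1}^n \bigl(1 + p_i (e^\lambda - 1)\bigr) \le \prod_{i=1}^n \exp\bigl(p_i(e^\lambda - 1)\bigr) = \exp\bigl(\mu(e^\lambda - 1)\bigr), \]
using $1+x \le e^x$. For the upper tail, Markov's inequality applied with $\lambda > 0$ to $e^{\lambda Z}$ at level $e^{\lambda(1+\delta)\mu}$ yields $\prob[Z \ge (1+\delta)\mu] \le \exp\bigl(\mu(e^\lambda-1) - \lambda(1+\delta)\mu\bigr)$. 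Optimizing in $\lambda$ gives the choice $\lambda = \log(1+\delta)$ and the bound $\prob[Z \ge (1+\delta)\mu] \le \exp\bigl(-\mu\psi_+(\delta)\bigr)$ where $\psi_+(\delta) := (1+\delta)\log(1+\delta) - \delta$. For the lower tail, the analogous calculation with $\lambda < 0$ (taking $\lambda = \log(1-\delta) < 0$) produces $\prob[Z \le (1-\delta)\mu] \le \exp\bigl(-\mu\psi_-(\delta)\bigr)$ where $\psi_-(\delta) := (1-\delta)\log(1-\delta) + \delta$.

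Next I would translate these sharp exponents into the inequalities stated in the proposition via elementary function comparisons. For the lower tail one checks $\psi_-(\delta) \ge \delta^2/2$ on $[0,1]$, which follows from computing that the function $\delta \mapsto \psi_-(\delta) - \delta^2/2$ vanishes at $\delta=0$ and has nonnegative derivative on $[0,1]$ (since $-\log(1-\delta) \ge \delta$). For the upper-tail quadratic form one checks $\psi_+(\delta) \ge \delta^2/(2+\delta)$ for all $\delta \ge 0$: again the difference vanishes at $0$, and differentiating reduces the claim to $\log(1+\delta) \ge 2\delta/(2+\delta)$, which is standard (e.g.\ both sides agree at $0$ and the derivative inequality $1/(1+\delta) \ge 4/(2+\delta)^2$ holds for $\delta \ge 0$). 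Finally, for $\delta \ge 1$ one notes $\delta^2/(2+\delta) \ge \delta/3$ since $3\delta \ge 2+\delta$, yielding the last bound as an immediate consequence of the second.

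There is no real obstacle here; the only step that requires any care is verifying the convexity inequality $\psi_+(\delta) \ge \delta^2/(2+\delta)$, and even that is a one-line derivative check. I would present the argument in the order above: set up the MGF bound once, derive both Legendre-dual exponents, then deduce each of the three stated forms from a one-line elementary comparison. Since the proposition is quoted from \cite{UM-book}, in the final write-up it suffices to cite the textbook, but the sketch above is the complete self-contained derivation.
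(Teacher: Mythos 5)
Your derivation is correct and is exactly the standard MGF/Chernoff argument that the cited reference \cite{UM-book} uses; the paper itself states this proposition without proof, deferring to that textbook. All three elementary comparisons ($\psi_-(\delta) \ge \delta^2/2$ on $[0,1]$, $\psi_+(\delta) \ge \delta^2/(2+\delta)$, and $\delta^2/(2+\delta) \ge \delta/3$ for $\delta \ge 1$) check out, so nothing further is needed.
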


\begin{proof}[Proof of Theorem~\ref{thm:main-upper}]

Given $\eps > 0$, apply Theorem~\ref{thm:local-alg} (with $\eps/5$ in place of $\eps$) to obtain $d^*(\eps) > 1$, $r = r(\eps,d)$ and an $r$-local function $h: \tilde\Lambda \to \{0,1\}$ that outputs independent sets with $\EE[h(T,o,X)] \ge (1-\eps/5) \frac{\log d}{d}$ when $(T,o) \sim \PGW(d)$ and $X$ is i.i.d.\ $\Unif([0,1])$.

By Lemma~\ref{lem:rand-det}, it is sufficient to prove the result for a \emph{random} polynomial instead of a deterministic one (up to a change in the constants $\gamma, C$). We will construct a random polynomial $f: \{0,1\}^{\binom{n}{2}} \to \RR^n$ as follows. The input $Y$ to $f$ encodes a graph on vertex set $[n]$. The internal randomness of $f$ samples vertex labels $ \{X_v\}_{v \in [n]}$ i.i.d.\ from $\Unif([0,1])$. We will construct $f$ with the following property:
\begin{equation}\label{eq:f-prop}
\text{for any $v \in [n]$, if $N_r(Y,v)$ is a tree with $|N_r(Y,v)| \le s$ then $f_v(Y,X) = h(Y,v,X)$}
\end{equation}
where $s = s(\eps,d,\eta)$ is a constant to be chosen later.

Concretely, we construct $f$ as follows. Let $\mathcal{G}_{v,r,s}$ be the collection of graphs $G$ on vertex set $[n]$ for which $|E(G)| \le s$ and every non-isolated vertex is reachable from $v$ by a path of length $\le r$. (In other words, $\mathcal{G}_{v,r,s}$ consists of all possible $r$-neighborhoods for $v$ of size $\le s$.) Let
\begin{equation}\label{eq:f-expansion}
f_v(Y,X) = \sum_{G \in \mathcal{G}_{v,r,s}} \alpha(G,v,X) \prod_{e \in E(G)} Y_e
\end{equation}
where the coefficients $\alpha(G,v,X)$ are chosen so that~\eqref{eq:f-prop} is satisfied, i.e., $\alpha(G,v,X)$ are defined recursively by
\begin{equation}\label{eq:coeff}
\alpha(G,v,X) = h(G,v,X) - \sum_{\substack{G' \in \mathcal{G}_{v,r,s} \\ E(G') \subsetneq E(G)}} \alpha(G',v,X).
\end{equation}

Let $\sT$ be a set of rooted trees consisting of one representative from each isomorphism class of rooted trees of depth at most $2r$. Let $\sT_s \subseteq \sT$ contain only those trees with at most $s$ edges. Let $Y \sim G(n,d/n)$, and for $T \in \sT$, let $n_T$ denote the number of occurrences of the neighborhood $T$ in $Y$, i.e.,
\[ n_T = |\{v \in [n] \,:\, N_{2r}(Y,v) \cong T\}| \]
where $\cong$ denotes isomorphism of rooted graphs. Also, for $T \in \sT$, let $p_T$ denote the probability that $T$ occurrs as the neighborhood of the root in $\PGW(d)$, i.e.,
\[ p_T = \prob_{(U,o) \sim \PGW(d)}[N_{2r}(U,o) \cong T]. \]
Also, let $\phi_T$ denote the probability over $X$ that $h(Y,v,X) = 1$ conditioned on $N_{2r}(Y,v) \cong T$. (Note that the event $\{h(Y,v,X) = 1\}$ depends only on $X$ and $N_r(Y,v)$ since $h$ is $r$-local.)

By applying Lemma~\ref{lem:graph-tree} to the function $g(G,v) = \One_{N_{2r}(G,v) \cong T}$, we have
\begin{equation}\label{eq:n_T-mean}
\left|\EE[n_T] - p_T n \right| \le cn^{3/4} \log n \end{equation}
for sufficiently large $n$ (depending on $d,r$). By Corollary~\ref{cor:local-conc}, for any $t \ge (2e)^{3/2} c \sqrt{n} (2d)^{2r}$,
\begin{equation}\label{eq:n_T-conc}
\prob\left[\left|n_T - \EE[n_T]\right| \ge t\right] \le \exp\left(-\frac{3t^{2/3}}{2ec^{2/3}n^{1/3}(2d)^{2r/3}}\right).
\end{equation}
Combining~\eqref{eq:n_T-mean} and~\eqref{eq:n_T-conc} we have the following: for any $\tau > 0$ and for sufficiently large $n$ (depending on $d,r,\tau$),
\begin{equation}\label{eq:n_T-comb}
\prob\left[\left|n_T - p_T n\right| \ge \tau n\right] \le \exp(-Cn^{1/3})
\end{equation}
for some $C = C(d,r,\tau) > 0$.

We will show that with high probability, the rounding procedure $V_f^\eta(Y,X)$ does not encounter the failure event (the second case of~\eqref{eq:V}). Suppose some vertex $v$ is such that $N_{2r}(Y,v)$ is a tree with $|N_{2r}(Y,v)| \le s$. Then for all $u \in N_1(Y,v)$ we have that $N_r(Y,u)$ is a tree with $|N_r(Y,u)| \le s$ and so by~\eqref{eq:f-prop}, $f_u(Y,X) = h(Y,u,X) \in \{0,1\}$. Since $h$ outputs independent sets, it follows that $v$ is not in the ``bad'' set $(A \setminus \tilde A) \cup B$ from the definition of $V_f^\eta$ (Definition~\ref{def:V}). We have now shown that $(A \setminus \tilde A) \cup B$ is disjoint from the set
\[ V_s := \bigcup_{T \in \sT_s} \{v \in [n] \,:\, N_{2r}(Y,v) \cong T\}. \]
For each $T \in \sT_s$, we have from~\eqref{eq:n_T-comb} that $|n_T - p_T n| \le \eta n / (2|\sT_s|)$ with probability $1-\exp(-\Omega(n^{1/3}))$ where $\Omega(\cdot)$ hides a constant depending on $\eps,d,r,s,\eta$. Choose $s$ large enough so that $\sum_{T \in \sT_s} p_T \ge 1 - \eta/2$. We now have
\[ |A \setminus \tilde A| + |B|
= |(A \setminus \tilde A) \cup B|
\le n - |V_s|
= n - \sum_{T \in \sT_s} n_T
\le n - \sum_{T \in \sT_s} \left(p_T n - \frac{\eta n}{2|\sT_s|}\right)
= \left(1 - \sum_{T \in \sT_s} p_T\right) n + \frac{\eta n}{2}
\le \eta n. \]
In conclusion, $V_f^\eta(Y,X)$ avoids the failure event with probability $1 - \exp(-\Omega(n^{1/3}))$.

Next we will show that the independent set $I := V_f^\eta(Y,X)$ is large with high probability. From the guarantees on $h$,
\[ \left(1-\frac{\eps}{5}\right) \frac{\log d}{d} \le \EE_{(T,o) \sim \PGW(d)}[h(T,o,X)] = \sum_{T \in \sT} p_T \phi_T. \]
Choose $s$ large enough so that $\sum_{T \in \sT_s} p_T \ge 1 - \frac{\eps}{5} \frac{\log d}{d}$. Since $\phi_T \in [0,1]$, this implies
\[ \sum_{T \in \sT_s} p_T \phi_T \ge \left(\sum_{T \in \sT} p_T \phi_T\right) - \frac{\eps}{5} \frac{\log d}{d} \ge \left(1 - \frac{2\eps}{5}\right) \frac{\log d}{d}. \]
Again using~\eqref{eq:n_T-comb}, with probability $1 - \exp(-\Omega(n^{1/3}))$ over $Y$,
\begin{equation}\label{eq:nphi}
\sum_{T \in \sT_s} n_T \phi_T
\ge \sum_{T \in \sT_s} \left(p_T n - \frac{\eps}{5 |\sT_s|} \frac{\log d}{d} n\right) \phi_T
\ge \left(\sum_{T \in \sT_s} p_T \phi_T\right) n - \frac{\eps}{5} \frac{\log d}{d} n \ge \left(1 - \frac{3\eps}{5}\right) \frac{\log d}{d} n.
\end{equation}

\noindent Now fix $Y$ satisfying~\eqref{eq:nphi} and consider the randomness of $X$. Recall from above that $(A \setminus \tilde A) \cup B$ is disjoint from $V_s$. Thus, if $v$ satisfies $N_{2r}(Y,v) \cong T$ for some $T \in \sT_s$ then $v$ will be included in the independent set $I := V_f^\eta(Y,X)$ iff $h(Y,v,X) = 1$, which occurs with probability $\phi_v := \phi_T$ (over the randomness of $X$). We will partition the elements of $V_s$ into ``bins'' $W_1,\ldots,W_{s+1}$ such that for each bin $W_i$, the vertices in $W_i$ have disjoint $r$-neighborhoods and so the random variables $\{\One_{v \in I}\}_{v \in W_i}$ are independent (conditioned on $Y$). Each vertex $v \in V_s$ has at most $s+1$ vertices in its $2r$-neighborhood, and so there are at most $s$ vertices $u \in V_s$ such that $u \ne v$ and $N_r(Y,v) \cap N_r(Y,u) \ne \emptyset$. Since there are $s+1$ bins, we can greedily assign vertices to bins in order to achieve the desired disjointness property. Now that the bins $\{W_i\}$ have been constructed, we have by the Chernoff bound (Proposition~\ref{prop:chernoff}) that for each $i$,
\begin{equation}\label{eq:W-chernoff}
\prob_X\left[\sum_{v \in W_i} \One_{v \in I} \le \left(1 - \frac{\eps}{5}\right)\mu_i\right] \le \exp\left(-\frac{1}{2}\left(\frac{\eps}{5}\right)^2 \mu_i \right)
\end{equation}
where
\begin{equation}\label{eq:mu-defn}
\mu_i = \EE_X \sum_{v \in W_i} \One_{v \in I} = \sum_{v \in W_i} \phi_v.
\end{equation}
Call a bin $W_i$ ``large'' if $\mu_i \ge \frac{\eps}{5(s+1)} \frac{\log d}{d} n$ and ``small'' otherwise. Using~\eqref{eq:W-chernoff} and a union bound over $i$, we have with probability $1-\exp(-\Omega(n))$ that every large bin $W_i$ satisfies
$\sum_{v \in W_i} \One_{v \in I} \ge \left(1 - \frac{\eps}{5}\right)\mu_i$. Provided this holds, we now have
\allowdisplaybreaks
\begin{align*}
|I| &\ge \sum_{v \in V_s} \One_{v \in I} \\
&\ge \sum_{i \,:\, W_i\text{ large}} \;\sum_{v \in W_i} \One_{v \in I} \\
&\ge \sum_{i \,:\, W_i\text{ large}} \left(1-\frac{\eps}{5}\right) \mu_i \\
&= \left(1-\frac{\eps}{5}\right) \left[\sum_i \mu_i - \sum_{i \,:\, W_i\text{ small}} \mu_i \right] \\
&\ge \left(1-\frac{\eps}{5}\right) \left[\left(\sum_i \mu_i\right) - \frac{\eps}{5} \frac{\log d}{d} n \right]& &\text{using the definition of ``small''} \\
&= \left(1-\frac{\eps}{5}\right) \left[\left(\sum_{v \in V_s} \phi_v\right) - \frac{\eps}{5} \frac{\log d}{d} n \right]& &\text{using the definition of $\mu_i$~\eqref{eq:mu-defn}} \\
&= \left(1-\frac{\eps}{5}\right) \left[\left(\sum_{T \in \sT_s} n_T \phi_T\right) - \frac{\eps}{5} \frac{\log d}{d} n \right] \\
&\ge \left(1-\frac{\eps}{5}\right) \left[\left(1 - \frac{3\eps}{5}\right)\frac{\log d}{d} n - \frac{\eps}{5} \frac{\log d}{d} n \right]& &\text{using~\eqref{eq:nphi}} \\
&= \left(1-\frac{\eps}{5}\right) \left(1 - \frac{4\eps}{5}\right)\frac{\log d}{d} n \\
&\ge (1-\eps) \frac{\log d}{d} n.
\end{align*}
Therefore, the independent set has size $|I| \ge (1-\eps)\frac{\log d}{d} n$ with probability $1-\exp(-\Omega(n^{1/3}))$ over both $Y$ and $X$.

Finally, we need to check the normalization condition: $\EE_{Y,X}[\|f(Y,X)\|^2] \le \gamma (1-\eps)\frac{\log d}{d}n$ for a constant $\gamma = \gamma(\eps,d,\eta) \ge 1$. By linearity of expectation, it is sufficient to show $\EE_{Y,X}[f_v(Y,X)^2] = O(1)$ uniformly over $v$. Fix a vertex $v \in [n]$ and define the random variable $N = |N_r(Y,v)|$. Recall the expansion~\eqref{eq:f-expansion} for $f_v$. For each $G \in \mathcal{G}_{v,r,s}$, the corresponding term in the sum can be nonzero only if $G$ is a subgraph of $N_r(Y,v)$. Thus, the number of nonzero terms is at most
\[ \binom{N}{\le s} = \sum_{i=0}^s \binom{N}{i} \le \sum_{i=0}^s N^i \le (N+1)^s. \]
Furthermore, we can see from~\eqref{eq:coeff} that the coefficient of each term is bounded by a constant, uniformly over $v$ and $X$: $|\alpha(G,v,X)| \le a$ for some $a = a(r,s)$. This means
\begin{equation}\label{eq:f2}
f_v(Y,X)^2 \le [a(N+1)^s]^2 = a^2(N+1)^{2s}.
\end{equation}

\noindent In order to bound the expectation of this quantity, we will need a tail bound for $N$. Starting from $m_0 = 1$, let $m_i$ be the number of vertices whose distance in $Y$ from $v$ is exactly $i$. Conditioned on $m_i$, we have that $m_{i+1}$ is stochastically dominated by Binomial$(m_i n, d/n)$. Using the Chernoff bound (Proposition~\ref{prop:chernoff}), for fixed $m_i \ge 1$ and any $\delta \ge 1$,
\[ \prob[m_{i+1} \ge (1+\delta)dm_i] \le \exp\left(-\frac{\delta d m_i}{3}\right) \le \exp\left(-\frac{\delta d}{3}\right). \]
Therefore, with probability at least $1 - r \exp(-\delta d/3)$, we have $m_i < [(1+\delta)d]^i$ for all $0 \le i \le r$ and so
\[ N < \sum_{i=0}^r [(1+\delta)d]^i \le [(1+\delta)d+1]^r. \]
For $\delta \ge 1$ and $d \ge 1$ we have $(1+\delta)d + 1 \le 2\delta d + 1 \le 3\delta d$, so we can rewrite the above as
\[ \prob[N \ge (3\delta d)^r] \le r \exp(-\delta d/3). \]
Letting $t = (3\delta d)^r$, we now have a tail bound for $N$: for all  $t \ge (3d)^r$,
\[ \prob[N \ge t] \le r \exp(-t^{1/r}/9). \]
Finally, combining this with~\eqref{eq:f2}, we have
\[ \EE_{Y,X}[f_v(Y,X)^2]
\le \sum_{t=0}^\infty a^2 (t+1)^{2s} \prob[N=t]
\le \sum_{t=0}^{\lceil(3d)^r\rceil} a^2 (t+1)^{2s} + \sum_{t=\lceil(3d)^r\rceil}^\infty a^2 (t+1)^{2s} r \exp(-t^{1/r}/9), \]
which is finite and independent of $n$. This completes the proof.
\end{proof}

\section*{Acknowledgments}

The author is grateful to Charles Bordenave for helpful discussions regarding concentration of local functions on graphs (Proposition~\ref{prop:local-moment} and Corollary~\ref{cor:local-conc}).

\bibliographystyle{alpha}
\bibliography{main}

\end{document}